\documentclass[runningheads]{llncs}

\usepackage[T1]{fontenc}
\usepackage{graphicx}

\usepackage{booktabs}       
\usepackage{amsfonts}       
\usepackage{nicefrac}       
\usepackage{microtype}      
\usepackage{xcolor}         

\usepackage{amsthm}

\usepackage{amsmath}
\usepackage{amssymb}
\usepackage{mathtools}
\newcommand{\vect}[1]{\boldsymbol{#1}}

\usepackage{algorithm}
\usepackage{algpseudocode}

\begin{document}
\title{SBM With Multiple Samples: Improved Spectral Recovery}

%
%
\author{Sie Hendrata Dharmawan\inst{1}\orcidID{0009-0008-5078-9907} \and
Peter Chin \inst{1}\orcidID{0000-0002-1913-4223}}

%
\authorrunning{Dharmawan, Chin}
%
\institute{Dartmouth College, Hanover NH 03755, USA}
%
\maketitle              
\begin{abstract}

We study community detection in the two-block stochastic block model under the setting where multiple independent graph samples drawn from the same distribution are available. Building on a recently simplified spectral algorithm that preserves the independence of adjacency matrix entries throughout, we show that averaging $m$ independent samples before applying spectral partitioning reduces the error bound $\gamma$ exponentially in $m$: specifically, one can find a $\gamma$-correct partition with probability $1 - o(1)$ whenever $\frac{(a-b)^2}{a+b} \geq \frac{C}{m} \log \frac{2}{\gamma}$, improving the single-sample requirement by a factor of $m$. The key technical contribution is a multi-sample analogue of the spectral norm bound on the noise matrix, which propagates through the Davis–Kahan subspace angle analysis to yield the improved recovery guarantee. We provide experimental validation across a range of graph sizes ($n$ up to $1000$) and sample counts ($m$ up to $9$), demonstrating that the derived bounds are sharp and that even two or three samples yield dramatic improvements in recovery accuracy. Our results offer a rigorous theoretical foundation for graph data augmentation strategies used in modern graph representation learning.

\keywords{Community Detection  \and Stochastic Block Model.}
\end{abstract}
\section{Introduction}
\label{section:intro}

Community detection in graphs has emerged as a foundational subroutine in modern graph-based learning systems, including graph neural networks (GNNs). The stochastic block model provides a canonical generative framework for benchmarking such systems: the ability of a GNN or spectral embedding layer to recover planted community structure is increasingly used as a proxy for its representational power and generalization behavior. Moreover, the multi-sample setting studied in this paper has a direct analogue in graph learning pipelines, where multiple snapshots of a dynamic network — or multiple augmented views of a graph in contrastive self-supervised learning — are aggregated before being passed to a downstream classifier. Our theoretical analysis of how averaging independent graph samples improves spectral recovery can therefore be understood as a rigorous characterization of a data augmentation strategy widely employed in graph representation learning, providing the kind of theoretical grounding that the neural network community increasingly seeks for empirically successful techniques.

The stochastic block model (SBM) serves as a prominent theoretical framework for analyzing this problem. In its simplest form, the model consists of two equal-sized blocks $V_1$ and $V_2$, each containing $n$ vertices. A random graph is generated according to the following distribution: edges between vertices within the same block occur with probability $\frac{a}{n}$, while edges between vertices in different blocks occur with probability $\frac{b}{n}$, where $a > b > 0$. Given such a graph, various algorithms exist for block recovery \cite{chin2015stochasticblockmodelcommunity}, \cite{Bui1984GraphBA}, \cite{Dyer1989TheSO}, \cite{McSherry2001SpectralPO}, \cite{CojaOghlan2009GraphPV}. 

In particular, Chin et al. introduced a spectral algorithm that achieves exponential bounds on the incorrect recovery rate in the case of a sparse graph \cite{chin2015stochasticblockmodelcommunity}. The original spectral algorithm includes a preprocessing step in which certain rows and columns of the adjacency matrix (specifically, those corresponding to vertices whose degree deviates significantly from its expectation) are deleted or modified before the eigendecomposition is performed. While this step is analytically convenient for controlling the spectral norm of the noise matrix, it introduces a subtle but consequential complication: once rows and columns are modified based on observed degree, the entries of the resulting matrix are no longer independently distributed, because whether a given entry is retained depends on the degrees of its incident vertices, which are functions of the entire row and column. This loss of independence makes it difficult to apply standard probabilistic tools (such as concentration inequalities for sums of independent random variables) to the modified matrix, and it considerably complicates any analysis that must track the joint behavior of entries across multiple samples. In \cite{dharmawan1}, Dharmawan and Chin show that this deletion-and-modification step is in fact unnecessary: the same asymptotic recovery guarantees, including the exponential bound on the incorrect recovery rate, are achievable by running the spectral algorithm directly on the unmodified adjacency matrix. By removing the preprocessing step, \cite{dharmawan1} preserves the independence of matrix entries throughout the entire algorithm, and it is precisely this independence that makes the multi-sample analysis of the present paper tractable.

In this paper, we investigate the scenario that involves multiple samples (graphs) drawn from an identical distribution. We show that each additional sample reduces the incorrect recovery rate, establish the asymptotic relationship between recovery accuracy and sample size, and provide experimental evidence for the sharpness of this relationship.

In the sparse graph case, with high probability, the linear fraction of isolated vertices \cite{Bollobás_2001} makes exact recovery impossible with high probability. In denser regimes, isolated vertices vanish and exact recovery becomes attainable, as our experiments confirm. In sparse regimes, although we cannot guarantee perfect recovery, we can still accurately recover a substantial portion of each block. Formally, we  would like to find a partition of $V_1',V_2'$ of $V = V_1 \cup V_2$ such that $V_i$ and $V_i'$ are very close to each other. To quantify the recovery accuracy, we introduce the following definition:

\begin{definition}
A collection of subsets $V_1', V_2'$ of $V_1 \cup V_2$ is $\gamma$-correct if $|Vi \cap V_i'| \geq (1-\gamma)n,i=1,2$.
\end{definition}

We would like to devise an algorithm that can guarantee $\gamma$-correctness for small $\gamma$ with high probability in polynomial time. In \cite{CojaOghlan2009GraphPV}, Coja-Oghlan proved

\begin{theorem}
\label{Coja-Oglan}
For any constant $\gamma > 0$, there exist constants $C_1, C_2 >0$ such that if $a,b > C_1$ and $\frac{(a-b)^2}{a+b} > C_2 \log(a+b)$, one can find a $\gamma$-correct partition using a polynomial time algorithm.
\end{theorem}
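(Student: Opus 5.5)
We propose to prove Theorem~\ref{Coja-Oglan} with a spectral algorithm on a regularized adjacency matrix, followed by a combinatorial clean-up step. Write the adjacency matrix as $A = \bar A + E$. Here $\bar A = \mathbb{E}A$ has rank two, with nonzero eigenvalues $a+b$ (eigenvector $\mathbf{1}$) and $a-b$ (eigenvector $\sigma$, the $\pm1$ indicator of $V_1$ versus $V_2$). The matrix $E$ has independent, centered entries above the diagonal.

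\textbf{Step 1: control the noise.} In the sparse regime the bound $\|E\| = O(\sqrt{a+b})$ fails, because high-degree vertices give $\|E\|$ of order $\sqrt{\log n/\log\log n}$. We therefore first remove, or down-weight, all vertices whose degree exceeds $C(a+b)$. A standard Chernoff count shows only $n e^{-\Omega(a+b)}$ vertices are removed when $a+b > C_1$. We then prove $\|E'\| \le C\sqrt{a+b}$ with high probability for the trimmed noise matrix $E'$, using the Feige--Ofek argument: an $\varepsilon$-net over unit vectors, split into light pairs (handled by Bernstein) and heavy pairs (handled by a discrepancy/edge-counting argument). We expect this spectral norm bound to be the main obstacle, since the light/heavy decomposition must be carried out carefully at constant average degree.

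\textbf{Step 2: subspace perturbation.} Let $\hat u$ be the second eigenvector of the trimmed matrix. The Davis--Kahan $\sin\Theta$ theorem, applied with eigengap $a-b$, gives
\[
\min_{s=\pm1}\Bigl\|\hat u - s\,\sigma/\sqrt{2n}\Bigr\|_2 \le \frac{C\sqrt{a+b}}{a-b}.
\]
If we need to remove the effect of the top eigenvector, we project onto the top-two eigenspace instead. Partition by the signs of $\hat u$. Each misclassified vertex contributes at least $1/(2n)$ to the squared error, so the initial partition is $\gamma_0$-correct with
\[
\gamma_0 \le C(a+b)/(a-b)^2 \le C/(C_2\log(a+b)).
\]
This is already a small constant, but it is not yet arbitrary.

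\textbf{Step 3: boosting to arbitrary $\gamma$.} Starting from the $\gamma_0$-correct partition, we reassign each vertex $v$ by majority: $v$ goes to the side where it has more neighbors. To preserve independence, we split the edges at random into two halves, compute the spectral partition on one half, and do the majority vote on the other. For a fixed vertex, the count difference is a sum of independent Bernoullis. Its mean is at least $(1-2\gamma_0)(a-b)/2$ and its variance is $O(a+b)$, so Chernoff gives an error probability per vertex of
\[
\exp\bigl(-c(a-b)^2/(a+b)\bigr) \le (a+b)^{-cC_2}.
\]
The errors in the initial partition affect this count only through the $O(\gamma_0 n)$ misplaced vertices, and we absorb them in the mean. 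Markov's inequality then bounds the fraction of misclassified vertices by $\gamma$, once $C_2$ is large in terms of $\gamma$. The algorithm runs in polynomial time, and the $\log(a+b)$ factor in the hypothesis is exactly what makes the per-vertex error summable to an arbitrary constant fraction.
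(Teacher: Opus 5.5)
The paper gives no proof of this statement. It is quoted from Coja-Oghlan, and within the paper it only follows from Theorem~\ref{theorem:chin-theorem}: take $C_1 \ge 1/\gamma$, so that $\log(a+b) \ge \log(2/\gamma)$.

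Your Steps 1--2 are a sound, self-contained route. Keeping the degree trimming is correct. For fixed $a,b$ the maximum degree $\Delta$ grows like $\log n/\log\log n$, and $\|A-\mathbb{E}A\| \ge \sqrt{\Delta}-(a+b)$. So an untrimmed bound like Theorem~\ref{theorem:M_bound_Chin} cannot hold in that regime.

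In fact these two steps already finish the proof; you misread the quantifiers. $C_2$ is chosen after $\gamma$. From $\gamma_0 \le C(a+b)/(a-b)^2 < C/(C_2\log(a+b))$, take $C_1 \ge 2$ (so $\log(a+b) > 1$) and $C_2 \ge C'/\gamma$ for a suitable absolute constant $C'$. Then $\gamma_0 \le \gamma$ with probability $1-o(1)$. No boosting is needed, and the $\log(a+b)$ factor plays no role: the hypothesis $\frac{(a-b)^2}{a+b} \ge C'/\gamma$ would suffice.

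One point in Step 2 is not optional. For the second eigenvector alone, the eigengap of $\mathbb{E}A$ is $\min(2b,a-b)$, not $a-b$, and $\sqrt{a+b}/b$ need not be small when $a \gg b$. You must work in the top-two eigenspace $W$ and take $\vect{v_2} \in W$ orthogonal to $P_W\mathbf{1}$, as Spectral Partition does. Then $\min_{s=\pm1}\|\vect{v_2}-s\vect{u_2}\| = O(\sin\angle(W,W_E))$, because $P_W\vect{u_1}$ and $P_W\vect{u_2}$ are $O(\sin\angle(W,W_E))$-close to $\vect{u_1}$ and $\vect{u_2}$.

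Step 3, which you present as essential, has two flaws as written.

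First, sign thresholding does not give a balanced split. Write $x=|V_1\cap V_2'|$ and $y=|V_2\cap V_1'|$. The expected blue-vote margin of $v\in V_1$ is $\tfrac12\bigl[(a-b)-2ax/n+2by/n\bigr]$. This equals your $(1-2\gamma_0)(a-b)/2$ only when $x=y$. For $y=0$ and $x=\gamma_0 n$ it is $\tfrac12(a-b)-a\gamma_0$. With $\gamma_0 \asymp 1/(C_2\log(a+b))$ and $(a-b)^2$ near $C_2(a+b)\log(a+b)$, this is negative once $a+b$ is large. The misplaced vertices shift the count by order $\gamma_0 a$, not $\gamma_0(a-b)$, so they cannot be ``absorbed in the mean''. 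You would need a balanced output (top and bottom $n$ coordinates, which forces $x=y$) or a size-normalised vote.

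Second, Markov's inequality only bounds the failure probability by $2(a+b)^{-cC_2}/\gamma$. This does not tend to $0$ with $n$. To get probability $1-o(1)$ you would need a second-moment bound, using that the votes at $u\neq v$ share only the edge $uv$.

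Since the statement fixes $\gamma$ in advance, the cleanest repair is simply to delete Step 3.
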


Coja-Oghlan proved Theorem \ref{Coja-Oglan} as part of a more general problem, and his algorithm was rather involved. In \cite{chin2015stochasticblockmodelcommunity}, Chin et. al. proved

\begin{theorem}
\label{theorem:chin-theorem}
    There are constants $C_1, C_2 >0$ such that the following holds. For any constants $a>b>C_1$ and $\gamma > 0$ satisfying
    \begin{equation}
        \frac{(a-b)^2}{a+b} \geq C_2 \log \frac{2}{\gamma}
    \end{equation}
    one can find a $\gamma$-correct partition with probability $1-o(1)$ using a simple spectral algorithm.
\end{theorem}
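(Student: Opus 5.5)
The plan is the standard spectral argument: write the adjacency matrix as its expectation plus noise, find a vector close to the block indicator, and then convert closeness in $\ell_2$ into a bound on misclassified vertices. Let $A$ be the $2n\times 2n$ adjacency matrix and $\bar A = \mathbb{E}A$, up to the diagonal correction, which is negligible. Then $\bar A$ has rank two, with eigenvalues $a+b$ (eigenvector $\mathbf{1}/\sqrt{2n}$) and $a-b$ (eigenvector $u=\chi/\sqrt{2n}$, where $\chi=\pm1$ according to the block). Write $A=\bar A+E$, where $E$ has independent centred entries above the diagonal with variance at most $(a+b)/n$ on average.

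\textbf{Step 1 (noise bound).} Show that $\|E\|\le C\sqrt{a+b}$ with probability $1-o(1)$. The plan is to follow the simplified route of \cite{dharmawan1}, which works on the unmodified matrix. In the sparse regime the high-degree rows spoil this, so instead of the full spectral norm I will bound the norm of $E$ restricted to the light vertices. I will combine Feige--Ofek style light-pair/heavy-pair counting with Bernstein's inequality over an $\varepsilon$-net. For the heavy vertices, whose degree exceeds $C(a+b)$, I will show there are at most $n e^{-c(a+b)}$ of them and simply give up on them. This step is the main obstacle. Its cost is at most $\gamma n/2$ misclassified vertices, which is absorbed once $C_1$ is large enough relative to the hypothesis.

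\textbf{Step 2 (eigenvector perturbation).} Take the top two eigenvectors of $A$, or of its light part, and project out the direction near $\mathbf 1$. Davis--Kahan then gives a unit vector $v$ in the computed two-dimensional eigenspace satisfying
$$\sin\angle(v,u)\le \frac{C\|E\|}{a-b}\le \frac{C'\sqrt{a+b}}{a-b}.$$
The spectral gap here is $\min(a-b,\,2b)$. Since $b>C_1$, this gap is of order $a-b$ after the usual reduction, in which the $a+b$ direction is removed by centring, that is, by using $A-\frac{d}{2n}J$ with $d$ the empirical average degree.

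\textbf{Step 3 (rounding).} Classify each vertex by the sign of its coordinate in $v$. Each misclassified vertex contributes at least $1/(2n)$ to $\|v-u\|^2$, so the number of errors is at most $C(a+b)n/(a-b)^2$. This alone gives only a polynomial dependence on $\gamma$. To reach the $\log(2/\gamma)$ dependence I will add the correction step of \cite{chin2015stochasticblockmodelcommunity}. First I split the vertex set at random into two halves, so that the initial partition is independent of the edges used next. Starting from the initial partition, which is $0.1$-correct, I reassign each vertex by majority vote on its neighbours within the other half. A vertex is wrong only if a difference of two sums of Bernoulli variables, with mean of order $a-b$, deviates below zero. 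Chernoff's bound gives probability $\exp(-c(a-b)^2/(a+b))\le\gamma/2$ under the hypothesis with $C_2$ large. Markov's inequality then converts this to fewer than $\gamma n$ errors with probability $1-o(1)$. The last thing to check is that the independence across halves survives the random split.
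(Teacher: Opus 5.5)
Your route differs from the paper's at both ends. The paper, following \cite{dharmawan1}, works on the unmodified matrix and has no correction stage. It asserts $\|A-\mathbb{E}A\|\le C\sqrt{a+b}$ (Theorem~\ref{theorem:M_bound_Chin}, argued via F\"uredi--Koml\'os and Krivelevich--Vu as in the proof of Theorem~\ref{theorem:M_bound_us}), applies Davis--Kahan to the two-dimensional eigenspace, and reaches $\vect{v_2}$ through a square-root lemma. It then takes the whole $\log(2/\gamma)$ dependence from the cited Theorem~\ref{theorem:dharmawan_log}, a lower bound on $\sin\angle(\vect{u_2},\vect{v_2})$ in terms of the realized error. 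You instead use a Feige--Ofek bound on the light part, a linear angle bound, and sign rounding only to reach a constant-correct partition. You then apply the majority-vote correction of \cite{chin2015stochasticblockmodelcommunity}, so the exponential rate comes from an explicit Chernoff bound rather than from the angle, which, as you note, controls errors only polynomially.

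The paper's version buys entrywise independence and a one-stage algorithm, which is what reduces the $m$-sample case to the substitution $\sigma^2\mapsto\sigma^2/m$. But its norm bound needs the leading term $2\sigma\sqrt n\le 2\sqrt{a+b}$ to dominate the $O(n^{1/3}\log n)$ error, which forces $a+b$ to grow with $n$ (as in the experiments, where $a=0.06n$). For fixed $a,b$ the maximum degree $\Delta\sim\log n/\log\log n$ gives $\|A-\mathbb{E}A\|\ge\sqrt{\Delta}-(a+b)\to\infty$. So your restriction to light vertices is what makes Step~1 true in the theorem's actual setting, and you should not use the top eigenvectors of the unmodified $A$.

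Two small fixes in Steps~1--2. First, since the first stage only has to be $0.1$-correct, the heavy-vertex loss $ne^{-c(a+b)}$ need only be a small constant fraction of $n$, not below $\gamma n/2$; this holds for $C_1$ large. Second, $b>C_1$ does not make $\min(a-b,2b)$ comparable to $a-b$. The gap $a-b$ comes from Davis--Kahan on the two-dimensional space, or from centering.

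The genuine gap is in the last step: Markov's inequality cannot give probability $1-o(1)$. The per-vertex failure probability $p\le e^{-c(a-b)^2/(a+b)}$ and $\gamma$ are both fixed constants. So for the number $N$ of misclassified vertices, Markov only yields $\Pr[N\ge\gamma n]\le 2p/\gamma$, which is a constant.

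You need concentration, and that requires the independence to be set up explicitly. As written, it is unclear on which edges your initial partition is computed. It must be computed separately on $G[Y]$ and on $G[Z]$, using only edges inside each half. Then the first-stage labels of $Y$ are independent of the $Y$--$Z$ edges. Conditional on them, the decisions for distinct $z\in Z$ depend on disjoint sets of independent edges. Hence the number of errors in $Z$ is a sum of independent indicators with mean at most $pn\le\gamma n/4$ (for $C_2$ large), and Chernoff gives fewer than $\gamma n/2$ errors with probability $1-e^{-\Omega(n)}$. The same argument handles $Y$, using the first-stage labels of $Z$. If the initial partition were computed from the whole graph, the cross edges would not be fresh and this argument would fail.

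Finally, the two halves come with arbitrary labellings, which must be matched. One way is to compare the corrected labels of $Z$ with its first-stage labels; these agree on most of $Z$ under exactly one identification. With these repairs your argument proves the statement.
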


Theorem \ref{theorem:chin-theorem} improves the relation between the accuracy $\gamma$ and the ratio $\frac{(a-b)^2}{a+b}$. Moreover, this bound is asymptotically sharp because according to \cite{zhang2015minimaxratescommunitydetection}, there exists a constant $c > 0$ such that when 
\begin{equation}
    \label{eq:zhang}
    \frac{(a-b)^2}{a+b} \leq c \log \frac{1}{\gamma}
\end{equation}
one \textbf{cannot} recover a $\gamma$-correct partition (in expectation), regardless of the algorithm.

However, the spectral algorithm presented in \cite{chin2015stochasticblockmodelcommunity} and the corresponding proof have several steps that are rather complex. In \cite{dharmawan1}, a simpler version of the spectral algorithm is established that maintains the bound in Theorem 2. And in this paper we extend that simplified algorithm to the multi-sample setting. We show that when we draw $m$ independent samples from the same distribution, we can dramatically improve the recovery rate. 

\begin{theorem}[Main Result]
\label{theorem:main-theorem}
There exist constants $C_1, C_2 > 0$ such that the following holds. For any constants $a>b>C_1$ and $\gamma > 0$ satisfying
\begin{equation}
\label{main-equation}
    \frac{(a-b)^2}{a+b} \geq \frac{C_2}{m} \log \frac{2}{\gamma}
\end{equation}
    given $m$ independently-sampled graphs, one can find a $\gamma$-correct partition with probability $1-o(1)$.
\end{theorem}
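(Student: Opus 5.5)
We reduce the multi-sample problem to the single-sample analysis of \cite{dharmawan1} by working with the averaged adjacency matrix. Let $A^{(1)},\dots,A^{(m)}$ be the adjacency matrices of the $m$ independent samples on the common vertex set $V=V_1\cup V_2$. Define $\bar A=\frac1m\sum_{k=1}^m A^{(k)}$. Then $\mathbb{E}\bar A=\bar P$, where $\bar P$ is the rank-two block matrix with entries $a/n$ on diagonal blocks and $b/n$ off them (up to the diagonal). Its nontrivial eigenvalues are $a+b$ and $a-b$, with the second eigenvector $\frac{1}{\sqrt{2n}}(\mathbf 1_{V_1}-\mathbf 1_{V_2})$. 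The algorithm is the simplified spectral algorithm of \cite{dharmawan1} run on $\bar A$ (equivalently on $\sum_k A^{(k)}$, which has the same eigenvectors): no row or column deletion, top-two eigenspace, then the same rounding/correction step. Because no preprocessing is done, the entries $\bar A_{ij}$, $i<j$, remain independent. Each is an average of $m$ i.i.d.\ Bernoulli variables, so it has variance at most $\frac{a}{mn}$.

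The key step is a multi-sample spectral norm bound on the noise $E=\bar A-\bar P$. We want, with probability $1-o(1)$, a bound of the form $\|E\|\le C\sqrt{(a+b)/m}$, at least after restricting to a large set of vertices if needed. Working with $S=mE=\sum_k A^{(k)}-m\bar P$ is cleaner. Its entries are independent centered Binomial$(m,p_{ij})-mp_{ij}$ with variance at most $ma/n$, so the effective density parameter is $m(a+b)$ in place of $a+b$. We will reprove the bound of \cite{dharmawan1} with $a,b$ replaced by $ma,mb$: an $\varepsilon$-net argument with Bernstein for the ``light'' pairs and a discrepancy/degree argument for the ``heavy'' pairs (Feige--Ofek/Le--Levina--Vershynin style). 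This yields $\|S\|\le C\sqrt{m(a+b)}$, hence $\|E\|\le C\sqrt{(a+b)/m}$. One must check that the binomial tails obey the same Chernoff bounds as a single Bernoulli with mean $mp$, which they do since they are sums of independent Bernoullis.

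We expect this norm bound to be the main obstacle. In the sparse regime, high-degree vertices of the summed graph break the naive $\sqrt{d}$ bound. We will first try to show that \cite{dharmawan1}'s argument only uses the entries through their independence, boundedness and variance profile. In that case summing $m$ samples simply rescales the degree parameter, and the bound transfers verbatim, with the constant $C_1$ ensuring $m(a+b)$ is large.

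Given the norm bound, we follow \cite{dharmawan1} step by step. Davis--Kahan gives $\sin\angle(\hat U,U)\le \frac{C\|E\|}{a-b}\le C\sqrt{\frac{a+b}{m(a-b)^2}}$. This is exactly the single-sample quantity with $\frac{(a-b)^2}{a+b}$ replaced by $m\frac{(a-b)^2}{a+b}$. The subsequent rounding and correction step of \cite{dharmawan1} converts this angle, via the same exponential tail estimates, into misclassification fraction at most $2\exp(-c\, m\frac{(a-b)^2}{a+b})$. For those tail estimates we again use the summed degrees $\sum_k$ (neighbors in $V_1'$ minus in $V_2'$), which are sums of independent Bernoullis with mean gap $m(a-b)/2$ and variance $m(a+b)/2$. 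Imposing $m\frac{(a-b)^2}{a+b}\ge C_2\log\frac2\gamma$ makes this at most $\gamma$, which is precisely \eqref{main-equation}. This gives a $\gamma$-correct partition with probability $1-o(1)$.
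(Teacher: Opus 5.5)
\textbf{Verdict.} Your outline follows the paper's: run the unmodified algorithm on $\bar A$, use that its entries stay independent with variance at most $(a+b)/(mn)$ to get $\|\bar A-\bar P\|\le C\sqrt{(a+b)/m}$, pass this through Davis--Kahan, and read off the single-sample conclusion with $\frac{(a-b)^2}{a+b}$ replaced by $m\frac{(a-b)^2}{a+b}$. The genuine gap is the norm-bound step you flagged yourself, and your proposed way around it cannot work.

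\textbf{Why the norm bound fails untrimmed.} Fix $a,b,m$ and let $n\to\infty$. With high probability the summed multigraph has maximum degree $\Delta=\Theta(\log n/\log\log n)$. Testing $S=m(\bar A-\bar P)$ on the normalized star around a maximum-degree vertex gives $\|S\|\ge\sqrt{\Delta}-o(1)$. So $\|S\|\le C\sqrt{m(a+b)}$ is false for the untrimmed matrix, whatever argument you use. The Feige--Ofek/Le--Levina--Vershynin proof does not transfer verbatim: its heavy-pair (discrepancy) part needs every degree to be $O(m(a+b))$. That holds only when $m(a+b)\gtrsim\log n$, or after high-degree vertices are removed or down-weighted. ``Restricting to a large set of vertices'' is not available inside your proof either, because Davis--Kahan needs the norm of the full perturbation of the matrix the algorithm actually diagonalizes. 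In fact the top two eigenvalues of the untrimmed $\bar A$ are then at least of order $\sqrt{\Delta}/m\to\infty$, far above $a+b$ and $a-b$. Their eigenvectors concentrate on the highest-degree vertices, so $W$ carries essentially no community information.

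\textbf{How to repair it.} You have two options.
\begin{itemize}
\item Reinstate a trimming or regularization step, giving up the ``no preprocessing'' feature. For example, zero out rows and columns whose summed degree exceeds $Cm(a+b)$. Your light/heavy argument then gives $C\sqrt{m(a+b)}$, and the removed part of $\bar P$ has small norm because only an $e^{-cm(a+b)}$ fraction of vertices is affected.
\item Assume $m(a+b)\ge c\log n$, where Feige--Ofek applies to the untrimmed sum.
\end{itemize}

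\textbf{Comparison with the paper.} The paper gets this step from F\"uredi--Koml\'os with $\sigma^2\le(a+b)/(mn)$ plus Krivelevich--Vu concentration. That is a one-line variance computation, but it has the same limitation. Those results need $\sigma\sqrt n$ to grow, and for constant $a,b$ the main term $2\sigma\sqrt n=O(1)$ is swamped by the $O(n^{1/3}\log n)$ error term. So that route is effectively valid only in a dense regime such as the experiments' $a=0.06n$.

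Your final step also differs from the paper's. The paper has no correction stage. It converts the Davis--Kahan bound using Chin et al.'s lemma $\sin\angle(\vect{u_2},\vect{v_2})\lesssim\sqrt{\sin\angle(W,W_E)}$ together with the \cite{dharmawan1} inequality $\sin\angle(\vect{u_2},\vect{v_2})\ge C\log^{-1/4}(2/\gamma)$. Your majority-vote correction is the more standard route to an exponential rate, but two things are missing:
\begin{itemize}
\item The Chernoff bound on ``neighbors in $V_1'$ minus neighbors in $V_2'$'' requires the voting edges to be independent of $V_1',V_2'$; otherwise these counts are not sums of independent Bernoullis with the stated means. Your halved parameters $m(a-b)/2$ and $m(a+b)/2$ suggest you have Chin et al.'s red/blue edge split in mind. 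Say so explicitly, or split the $m$ samples between the spectral and voting stages.
\item You need a second-moment argument to pass from the expected misclassified fraction to the realized one with probability $1-o(1)$.
\end{itemize}
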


Theorem \ref{theorem:main-theorem} allows us to dramatically improve the recovery rate by drawing additional samples from the same distribution, because Equation \ref{main-equation} implies the achievable error rate satisfies:
\begin{equation}
    \gamma \leq 2 \exp\left( -m\frac{(a-b)^2}{C_2(a+b)}\right)
\end{equation}
so each additional sample reduces $\gamma$ exponentially. 

The rest of the paper is organized as follows: in Section \ref{section:spectral_Chin} we will present the original result presented by Chin et al., including the original spectral algorithm and its main steps of proof. In Section \ref{section:spectral_us} we will present our modification and the effect of having $m$ samples has on each step of the proof. In Section \ref{section:exp_1} we will show the sharpness of some bounds in Chin's paper, while in Section \ref{section:exp_m} we will show the sharpness of the bounds in our result, especially the one asserted by Theorem \ref{theorem:main-theorem}.

\section{Original Spectral Algorithm}
\label{section:spectral_Chin}
In \cite{chin2015stochasticblockmodelcommunity}, Chin et al. gave the Spectral Algorithm that guarantees the result in Theorem \ref{theorem:chin-theorem}, which is then simplified by Dharmawan and Chin in \cite{dharmawan1}. But first let us define some variables. Let $A$ denote the adjacency matrix of a random graph generated from the distribution described in Section \ref{section:intro}. And let $A_E = \mathbb{E}[A]$ be the expected adjacency matrix, with entries $a/n$ and $b/n$. Then $A_E$ is a rank two matrix with two non-zero eigenvalues $\lambda_1 = a+b$ and $\lambda_2 = a-b$. Then unit eigenvector $\vect{u_1}$ corresponding to the eigenvalue $a+b$ has coordinates:
\begin{equation}
    \vect{u_1}(i) = \frac{1}{\sqrt{2n}} \forall i = 1,\dots, 2n
\end{equation}

while the unit eigenvector $\vect{u_2}$ corresponding to the eigenvalue $a-b$ has coordinates
\begin{equation}
    \vect{u_2}(i) =
    \begin{cases*}
      \frac{1}{\sqrt{2n}} & if $i \in V_1$ \\
      -\frac{1}{\sqrt{2n}}        & if $i \in V_2$
    \end{cases*}
  \end{equation}

\begin{algorithm}
    \begin{algorithmic}[1]
        \State $A \gets \text{adjacency matrix}$
        \State $d \gets a+b$
        \State $(\lambda_1, \vect{w_1}), (\lambda_2, \vect{w_2}) \gets \text{top two eigenvalues and eigenvectors of $A$} $
        \State $W \gets \text{the subspace spanned by $(\vect{w_1}, \vect{w_2})$} $
        \State $\vect{v_1} \gets \text{projection of $(1,1,\dots,1)$ on to $W$}$
        \State $\vect{v_2} \gets \text{the unit vector in $W$ perpendicular to $\vect{v_1}$}$
        \State $V_1' \subset V \gets \text{highest $n$ vertices in $\vect{v_2}$}$
        \State $V_2' \subset V \gets \text{lowest $n$ vertices in $\vect{v_2}$}$
        \Return $(V'_1,V'_2)$
    \end{algorithmic}
    \caption{Spectral Partition}
    \label{fig:spectral1}
        
\end{algorithm}

Note that $\vect{u_2}$, the second eigenvector of $A_E$, identifies the correct partition. We would like to use the second eigenvector of $A$ to approximate $\vect{u_2}$. The Spectral Algorithm as given in Figure \ref{fig:spectral1} is based on \cite{dharmawan1}'s simplification of the original Spectral Algorithm. The intuition is that $\vect{v_1}$ should be close to $\vect{u_1}$ and $\vect{v_2}$ should be close to $\vect{u_2}$, but it is hard to recover them directly. But we also know that the subspace $W$ is going to be close to the subspace spanned by $\vect{u_1}, \vect{u_2}$, and we know that $\vect{u_1} = (1,1,\dots,1)$, and that $\vect{u_1} \perp \vect{u_2}$. The proof of Theorem  \ref{theorem:chin-theorem} follows that intuition, and amounts to quantifying that notion of "closeness" by bounding the sin of the angle between the subspaces and vectors.  The rest of this section summarizes the proof as shown in \cite{chin2015stochasticblockmodelcommunity} and \cite{dharmawan1}.

\subsection{Bounding the difference matrix}

Let $M = A - A_E$ be the difference between the expected and realized adjacency matrix. Then \cite{dharmawan1} asserts the following:

\begin{theorem}
\label{theorem:M_bound_Chin}
    There exist constants $C_1, C_2$ such that if $a > b > C_1$, and matrix $M$ is obtained as described above, then we have 
    \begin{equation}
        ||M|| \leq C_2 \sqrt{a+b}
    \end{equation}
    with probability $1-o(1)$.
\end{theorem}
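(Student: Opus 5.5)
The plan is the Feige--Ofek style argument for sparse random matrices, applied to the centered matrix $M = A - A_E$, whose entries $M_{ij}$ ($i<j$) are independent, mean zero, bounded by $1$, and have variance at most $d/n$ with $d=a+b$.

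\textbf{Step 1 (discretization).} Since $M$ is symmetric, $\|M\| = \sup_{x,y\in S^{2n-1}} x^T M y$. Fix a $\tfrac12$-net $\mathcal N$ of the unit sphere of size at most $5^{2n}$; then $\|M\| \le 4\sup_{x,y\in\mathcal N} x^T M y$. For fixed $x,y$, split the index pairs into \emph{light} pairs $L=\{(i,j): |x_i y_j| \le \sqrt d/n\}$ and \emph{heavy} pairs.

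\textbf{Step 2 (light pairs).} The sum $\sum_{(i,j)\in L} x_i M_{ij} y_j$ is a sum of independent bounded centered terms, each of size at most $\sqrt d/n$, with total variance at most $d/n$. Bernstein's inequality gives a tail $\exp(-c\,C^2 n)$ at level $C\sqrt d$. For $C$ large this beats the $25^{n}$ union bound over $\mathcal N\times\mathcal N$. The contribution of $\mathbb{E}$ restricted to heavy pairs is handled by Cauchy--Schwarz and is at most $\sqrt d$.

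\textbf{Step 3 (heavy pairs, the main obstacle).} Here one uses the discrepancy property of $A$. With high probability, for all vertex sets $S,T$, the edge count $e(S,T)$ is either at most $C\,\mu(S,T)$, where $\mu(S,T)=|S||T|d/n$, or satisfies $e(S,T)\log\frac{e(S,T)}{\mu(S,T)} \le C\max(|S|,|T|)\log\frac{n}{\max(|S|,|T|)}$. This comes from Chernoff plus a union bound over pairs $(S,T)$. Bucketing coordinates of $x$ and $y$ dyadically then yields a bound $O(\sqrt d)$ on the heavy part. This holds \emph{provided} every vertex has degree $O(d)$.

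This bounded-degree proviso is exactly where I expect trouble, and I do not see a way around it. For constant $a,b$ the maximum degree of $A$ is of order $\log n/\log\log n$. Since $\|M\|\ge \|M e_v\|\approx\sqrt{\deg v}$, the norm $\|M\|$ then grows with $n$, so the bound $C_2\sqrt{a+b}$ cannot hold with probability $1-o(1)$ for fixed $a,b$ as $n\to\infty$.

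I see two ways to rescue the statement:
\begin{itemize}
\item Assume $d=a+b\ge C_1\log n$. Then Chernoff bounds give all degrees $O(d)$ with high probability, and Steps 1--3 go through.
\item Keep $d$ constant but apply Steps 1--3 to $M$ restricted to vertices of degree at most $C d$. This is the regularization of \cite{chin2015stochasticblockmodelcommunity}. The removed rows and columns then have to be controlled separately: there are only $ne^{-\Omega(d)}$ of them, which would be absorbed into $\gamma$.
\end{itemize}

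The second route is the one I would try first, since it is what the later Davis--Kahan argument needs. However, it abandons the claim that no preprocessing is required. I would therefore check carefully which form of Theorem~\ref{theorem:M_bound_Chin} is actually used downstream.
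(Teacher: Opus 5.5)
Your obstruction is correct and decisive. With $a>b$ fixed and $n\to\infty$, the statement is false for the unmodified adjacency matrix. Take a vertex $v$ of maximum degree $\Delta=(1+o(1))\log n/\log\log n$. Then $\|M\|\ge\|Me_v\|_2=(1-o(1))\sqrt{\Delta}$, which eventually exceeds $C_2\sqrt{a+b}$ for every fixed $C_2$. You were right not to force Step~3.

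The paper gives no proof of Theorem~\ref{theorem:M_bound_Chin} itself; it imports it from \cite{dharmawan1}. The only argument it supplies is the case $m=1$ of the proof of Theorem~\ref{theorem:M_bound_us}: F\"uredi--Koml\'os \cite{Fredi1981TheEO} for $\mathbb{E}\lambda_1(M)$, plus the concentration bound of \cite{krivelevich2000concentrationeigenvaluesrandomsymmetric}. That route fails at the same point, just less visibly:
\begin{itemize}
\item F\"uredi--Koml\'os is a theorem about fixed $\sigma$, while here $\sigma^2\le(a+b)/n\to0$.
\item Even granting $\mathbb{E}\lambda_1(M)=2\sigma\sqrt n+O(n^{1/3}\log n)$, the main term is $2\sigma\sqrt n\le 2\sqrt{a+b}=O(1)$. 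So the error term dominates, contrary to the proof's claim that ``the first term dominates''.
\item Concentration then only pins $\|M\|$ near a mean that, by your lower bound, grows like $\sqrt{\log n/\log\log n}$.
\end{itemize}
Taken literally, that argument needs $a+b\gg n^{2/3}\log^2 n$. Vu's quantitative version of F\"uredi--Koml\'os lowers this to about $a+b\gtrsim\log^4 n$. Either way it covers the experiments with $a=0.06n$, $b=0.04n$, but not constant $a,b$.

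Your rescue (i) is the right repair, and your Feige--Ofek outline proves it. For $a+b\ge C\log n$, all degrees are $O(a+b)$ w.h.p., and Steps~1--3 give $\|M\|=O(\sqrt{a+b})$. This threshold is sharp up to constants, since for $a+b=o(\log n)$ one has $\Delta\gg a+b$. So your tool reaches a strictly sparser regime than the paper's citations do.

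Rescue (ii) is the degree truncation of \cite{chin2015stochasticblockmodelcommunity}. It reinstates exactly the preprocessing that \cite{dharmawan1} and this paper claim to eliminate.

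Your instinct about downstream use is well founded. Theorems~\ref{theorem:W_angle_bound_Chin}, \ref{theorem:vector_angle_bound_Chin}, \ref{theorem:gamma_bound_quad_Chin} and~\ref{theorem:main-theorem} feed this norm bound into Davis--Kahan, so without truncation their proofs need $a+b\gtrsim\log n$. Moreover, for fixed $a,b$ the leading eigenvectors of the raw $A$ are known to localize on high-degree vertices. So the algorithm breaks down, not only the proof.

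There are two harmless slips in your sketch:
\begin{itemize}
\item Bernstein gives a tail $e^{-cCn}$ rather than $e^{-cC^2n}$, because the sup-norm term dominates for large $C$.
\item A $\tfrac12$-net in dimension $2n$ yields $5^{4n}$ pairs, not $25^n$.
\end{itemize}
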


The matrix norm $||M||$ in Theorem \ref{theorem:M_bound_Chin} refers to the spectral norm, defined as $\sup\{||Mx||_2 : ||x||_2 \leq 1\}$. For the rest of this paper, all matrix norms also refer to the same definition. 

\subsection{Bounding the angle between subspaces and vectors}

Let $W$ be the two dimensional eigenspace corresponding to the top two eigenvalues of $A$, and let $W_E$ be the corresponding space of $A_E$. Then the following theorem claims that the angle $\angle(W,W_E)$ is sufficiently small. Here we use the usual convention $\sin \angle(W_1, W_2) := ||P_{W_1} - P_{W_2}||$ where $P_W$ is the orthogonal projection onto $W$. By using Davis-Kahan theorem as shown in \cite{DAVIS1963159}, we have Theorem \ref{theorem:W_angle_bound_Chin}

\begin{theorem}
\label{theorem:W_angle_bound_Chin}
    There exist constants $C_1, C_2$ such that if $a > b > C_1$, and subspaces $W,W_E$ are as described above, then we have 
    \begin{equation}
        \sin \angle(W,W_E) \leq C_2 \frac{\sqrt{a+b}}{a-b}
    \end{equation}
    with probability $1-o(1)$.
\end{theorem}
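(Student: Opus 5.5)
The plan is to apply the Davis--Kahan $\sin\Theta$ theorem \cite{DAVIS1963159} to the perturbation $A = A_E + M$, using Theorem \ref{theorem:M_bound_Chin} to control $\|M\|$. In the form we use, Davis--Kahan says the following. Suppose $W_E$ is spanned by the eigenvectors of $A_E$ whose eigenvalues lie in an interval $S$, and suppose the remaining spectrum of $A$ lies at distance at least $\delta>0$ from $S$. Then
\begin{equation*}
\sin\angle(W,W_E) \le \frac{\|M\|}{\delta}.
\end{equation*}
Here $W$ is the span of the eigenvectors of $A$ whose eigenvalues lie in the corresponding region.

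The spectrum of $A_E$ is $\{a+b,\ a-b,\ 0,\dots,0\}$, and $W_E=\mathrm{span}(\vect{u_1},\vect{u_2})$ corresponds to the interval $S=[a-b,a+b]$. We condition on the event of Theorem \ref{theorem:M_bound_Chin}, namely $\|M\|\le C\sqrt{a+b}$, which holds with probability $1-o(1)$. By Weyl's inequality, each eigenvalue of $A$ is within $\|M\|$ of the corresponding eigenvalue of $A_E$. Hence the top two eigenvalues of $A$ are at least $a-b-C\sqrt{a+b}$, and all others are at most $C\sqrt{a+b}$.

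We split into two cases.

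\emph{Case $a-b \ge 4C\sqrt{a+b}$.} Then the remaining eigenvalues of $A$ lie at distance at least $a-b-C\sqrt{a+b} \ge \tfrac34(a-b)$ from $S$. We take $\delta=\tfrac34(a-b)$. The remaining spectrum of $A$ is then separated from $S$ by $\delta$, and the top-two eigenspace of $A$ is exactly the $W$ used in Algorithm \ref{fig:spectral1}. Davis--Kahan gives
\begin{equation*}
\sin\angle(W,W_E) \le \frac{C\sqrt{a+b}}{\tfrac34(a-b)} = \frac43\, C\,\frac{\sqrt{a+b}}{a-b}.
\end{equation*}

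\emph{Case $a-b < 4C\sqrt{a+b}$.} Then $\sqrt{a+b}/(a-b) > 1/(4C)$. Since $\sin\angle \le 1$ always, the claimed bound holds trivially once $C_2 \ge 4C$.

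Taking $C_2=\max(4C,\tfrac43 C)=4C$ and $C_1$ as in Theorem \ref{theorem:M_bound_Chin} completes the proof.

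The main obstacle is the correct use of Davis--Kahan. It needs separation between the unperturbed eigenvalue set and the perturbed complementary spectrum (or a symmetric variant). The case split above is what ensures the gap $\delta$ is of order $a-b$. Everything probabilistic is already contained in Theorem \ref{theorem:M_bound_Chin}.
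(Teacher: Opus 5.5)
Your proof is correct and takes the same approach as the paper, which simply invokes the Davis--Kahan theorem together with the bound $\|M\| \le C\sqrt{a+b}$ from Theorem~\ref{theorem:M_bound_Chin}. You fill in what the paper leaves implicit: the Weyl-inequality eigengap of order $a-b$, and the trivial case $a-b < 4C\sqrt{a+b}$, where $\sin\angle(W,W_E)\le 1$ already gives the bound.
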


Given the upperbound of the angle between $W$ and $W_E$, the next step is to bound the angle between $\vect{u_2}$ (the second eigenvector of $A_E$) and $\vect{v_2}$ (the vector obtained in step 6 of \textbf{Spectral Partition}). The intuition is that, if these two vectors are close to each other, then $\textbf{Spectral Partition}$ will output a mostly correct partition. It turns out that the lemma in \cite{chin2015stochasticblockmodelcommunity} proved that $\sin \angle(\vect{u_2},\vect{v_2})$ is proportional to the $\sqrt{\sin \angle(W,W_E)}$, from which we have the following result:

\begin{theorem}
\label{theorem:vector_angle_bound_Chin}
    There exist constants $C_1, C_2$ such that if $a > b > C_1$, and vectors $\vect{u_2},\vect{v_2}$ are as described above, then we have 
    \begin{equation}
        \sin \angle(\vect{u_2},\vect{v_2}) \leq C_2 \sqrt{\frac{\sqrt{a+b}}{a-b}}
    \end{equation}
    with probability $1-o(1)$.
\end{theorem}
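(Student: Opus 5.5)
The plan is to derive Theorem \ref{theorem:vector_angle_bound_Chin} from Theorem \ref{theorem:W_angle_bound_Chin} by a deterministic geometric argument. Write $\epsilon := \sin\angle(W,W_E)$, which is at most $C\sqrt{a+b}/(a-b)$ with probability $1-o(1)$. We may assume $\epsilon$ is below a small absolute constant. Otherwise the claimed bound exceeds $1$ once $C_2$ is large enough, and there is nothing to prove. Let $\vect{1}=\sqrt{2n}\,\vect{u_1}$. Since $\vect{u_1}\in W_E$, we have $\|\vect{1}-P_W\vect{1}\| = \|(P_{W_E}-P_W)\vect{1}\|\le \epsilon\|\vect{1}\|$. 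So $\vect{v_1}=P_W\vect{1}$, normalized as $\hat{\vect{v}}_1$, satisfies $\sin\angle(\vect{u_1},\hat{\vect{v}}_1)\le\epsilon$, and in fact $\|\vect{u_1}-\hat{\vect{v}}_1\|=O(\epsilon)$.

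Next, project $\vect{u_2}$ onto $W$. Set $\vect{x}=P_W\vect{u_2}$, so that $\|\vect{u_2}-\vect{x}\|\le\epsilon$. Write $\vect{x} = \alpha\hat{\vect{v}}_1 + \beta\vect{v_2}$; this is possible because $W=\mathrm{span}(\hat{\vect{v}}_1,\vect{v_2})$ with these two vectors orthonormal. The coefficient $\alpha=\langle \vect{x},\hat{\vect{v}}_1\rangle$ splits as $\langle \vect{x}-\vect{u_2},\hat{\vect{v}}_1\rangle+\langle \vect{u_2},\hat{\vect{v}}_1-\vect{u_1}\rangle$, using $\langle\vect{u_2},\vect{u_1}\rangle=0$. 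Each term is $O(\epsilon)$, so $|\alpha|=O(\epsilon)$. Then $\|\vect{u_2}-\beta\vect{v_2}\|\le \|\vect{u_2}-\vect{x}\|+|\alpha| = O(\epsilon)$, and $|\beta|\ge 1-O(\epsilon)$. Therefore $\sin\angle(\vect{u_2},\vect{v_2})\le \|\vect{u_2}-\beta\vect{v_2}\| = O(\epsilon)$.

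This gives the linear bound $O(\epsilon)$, which is stronger than the stated $C_2\sqrt{\epsilon}$ in the relevant regime $\epsilon\le 1$. The theorem then follows by choosing $C_2$ large and using $\epsilon\le\sqrt{\epsilon}$ there; in the regime $\epsilon>$ const the bound is trivial. The one delicate point is checking that the orthonormal decomposition and the bound on $\|\hat{\vect{v}}_1-\vect{u_1}\|$ hold with the correct sign conventions. For the sign, note that $\langle \vect{1}, P_W\vect{1}\rangle=\|P_W\vect{1}\|^2>0$, so $\hat{\vect{v}}_1$ is oriented toward $\vect{u_1}$, and then $\|\hat{\vect{v}}_1-\vect{u_1}\|\le\sqrt2\,\epsilon$ follows. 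The sign ambiguity of $\vect{v_2}$ is harmless, since we only bound angles between lines.

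If one instead wants to mirror the route of \cite{chin2015stochasticblockmodelcommunity}, the main obstacle is exactly the step transferring subspace closeness to closeness of the specific vector $\vect{v_2}$. That route applies a cruder inequality, bounding $\sin^2$ of the angle by a sum of cross terms of order $\epsilon$, and this is what produces the square root. Both routes yield the stated inequality on the same $1-o(1)$ probability event, namely the one from Theorem \ref{theorem:W_angle_bound_Chin}.
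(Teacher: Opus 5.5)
Your proof is correct, but it follows a different route from the paper's. The paper gets this theorem by citation: it combines Theorem \ref{theorem:W_angle_bound_Chin} with the lemma of \cite{chin2015stochasticblockmodelcommunity}, which says that $\sin\angle(\vect{u_2},\vect{v_2})$ is at most a constant times $\sqrt{\sin\angle(W,W_E)}$. You instead prove a self-contained deterministic inequality, $\sin\angle(\vect{u_2},\vect{v_2})\le(1+\sqrt{2})\,\epsilon$ for $\epsilon=\sin\angle(W,W_E)<1$. It works because $\vect{u_1}$ is known exactly, so $\hat{\vect{v}}_1$ tracks $\vect{u_1}$ to first order and the leakage $\alpha=\langle\vect{u_2},\hat{\vect{v}}_1-\vect{u_1}\rangle$ is $O(\epsilon)$. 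Two small simplifications: your first term $\langle\vect{x}-\vect{u_2},\hat{\vect{v}}_1\rangle$ is actually zero, since $\vect{x}-\vect{u_2}\perp W$; and the bound $|\beta|\ge 1-O(\epsilon)$ is not needed, because $\sin\angle(\vect{u_2},\vect{v_2})=\min_t\|\vect{u_2}-t\vect{v_2}\|$.

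What your route buys is the sharp order. Taking $W=\mathrm{span}(\vect{u_1},\cos\theta\,\vect{u_2}+\sin\theta\,\vect{z})$ with $\vect{z}\perp W_E$ gives $\sin\angle(\vect{u_2},\vect{v_2})=\epsilon$ exactly, so linear order cannot be improved and the square root is an artifact of the cited lemma. Because your inequality is deterministic, it also carries over verbatim to Theorem \ref{theorem:vector_angle_bound_us}. What the paper's route buys is brevity, and its square-root form is what the later bookkeeping uses.

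Do not feed your sharper bound into that bookkeeping, though. Combined with Theorem \ref{theorem:dharmawan_log}, it would give $\log(2/\gamma)\gtrsim\bigl(\frac{(a-b)^2}{a+b}\bigr)^2$. For large $\frac{(a-b)^2}{a+b}$ this is incompatible with the lower bound \eqref{eq:zhang}. Since your inequality is correct, the paper's final rate apparently relies on the square-root loss here offsetting an overstatement in Theorem \ref{theorem:dharmawan_log} or in the inputs it is combined with. That is an issue with the paper's chain of results, not a gap in your argument.
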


\subsection{Bounding error from \textbf{Spectral Partition}}
Finally, \cite{dharmawan1} shows that 
\begin{theorem}
\label{theorem:dharmawan_log}
    There exists a constant $C$ such that
    \begin{equation}
        \sin \angle(\vect{u_2},\vect{v_2}) \geq \frac{C}{\sqrt[4]{\log (2/ \gamma)}}
    \end{equation}
    with probability $1-o(1)$.

\end{theorem}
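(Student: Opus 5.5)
The statement should be read as the contrapositive input to the final argument. On the event that \textbf{Spectral Partition} returns a partition that is \emph{not} $\gamma$-correct, I will show that $\sin \angle(\vect{u_2},\vect{v_2}) \geq C/\sqrt[4]{\log(2/\gamma)}$ with probability $1-o(1)$. Combined with Theorem \ref{theorem:vector_angle_bound_Chin}, this gives
\[
\frac{C^4}{\log(2/\gamma)} \leq C_2^4 \frac{a+b}{(a-b)^2},
\]
which contradicts $\frac{(a-b)^2}{a+b} \geq C' \log\frac{2}{\gamma}$ once $C'$ is large. So it suffices to relate the number of misclassified vertices to the angle.

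\emph{Step 1: reduce to a coordinatewise statement.} Write $\vect{v_2} = \cos\theta\, \vect{u_2} + \sin\theta\, \vect{e}$ with $\vect{e} \perp \vect{u_2}$ a unit vector, where $\theta = \angle(\vect{u_2},\vect{v_2})$ and we fix the sign so that $\cos\theta>0$. Since $V_1'$ consists of the top $n$ coordinates of $\vect{v_2}$, every misclassified vertex lies on the ``wrong side'' of the median of $\vect{v_2}$. If there are more than $\gamma n$ errors in $V_1$, then at least $\gamma n$ coordinates of $\vect{v_2}$ deviate from their target value $\pm\cos\theta/\sqrt{2n}$ by at least roughly $\cos\theta/\sqrt{2n}$.

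The naive $\ell_2$ count only gives $\gamma \lesssim \sin^2\theta$. That is polynomial, and to get $\log(2/\gamma)$ we need an exponential tail.

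\emph{Step 2: exponential tail via the eigen-equation.} Since $\vect{v_2}\in W$ and $W$ is an invariant subspace of $A$ with eigenvalues of order $a\pm b$, I use $A\vect{v_2}\approx \lambda \vect{v_2}$ coordinatewise and split
\[
A\vect{v_2} = A_E\vect{v_2} + M\vect{u_2}\cos\theta + M\vect{e}\,\sin\theta .
\]
\begin{itemize}
\item The first term has coordinates $\pm (a-b)\cos\theta/\sqrt{2n}$ plus a constant shift along $\vect{u_1}$, which does not affect the median split.
\item The second term has coordinates $(M\vect{u_2})(i)$, each a centered sum of independent Bernoulli variables, since the matrix entries were never modified. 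By Bernstein/Chernoff, $\Pr[|(M\vect{u_2})(i)| > \tfrac14 (a-b)/\sqrt{2n}] \le \exp(-c(a-b)^2/(a+b))$.
\item The third term is controlled in $\ell_2$ by $\|M\|\sin\theta \le C\sqrt{a+b}\,\sin\theta$, using Theorem \ref{theorem:M_bound_Chin}.
\end{itemize}

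\emph{Step 3: count the bad vertices.} A vertex can be misclassified only if it is bad for the second term or bad for the third.
\begin{itemize}
\item For the second term, the expected number of bad vertices is $2n\exp(-c(a-b)^2/(a+b))$. The indicators depend only on disjoint row sums up to symmetric entries, so a bounded-differences argument makes this hold with probability $1-o(1)$.
\item For the third term, Markov/Chebyshev on $\ell_2$ mass bounds the count by $n\,C(a+b)\sin^2\theta/(a-b)^2$.
\end{itemize}
Feeding in the assumed angle bound, the second count becomes $n\exp(-c/\sin^4\theta)$. It exceeds $\gamma n/2$ only if $\sin^4\theta \gtrsim 1/\log(2/\gamma)$, which is the claimed inequality. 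The third count is of lower order under the same hypothesis.

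\emph{Main obstacle.} The hard step is Step 2. The vector $\vect{v_2}$ depends on $M$, and $\lambda$ is only approximately $a-b$, so the split ``$A\vect{v_2} \approx \lambda\vect{v_2}$'' must be made rigorous with $\vect{v_2}$ not exactly an eigenvector: it is a combination of $\vect{w_1},\vect{w_2}$. I would handle this by working with $A P_W$ and projecting out the $\vect{u_1}$-direction, using $\lambda_1-\lambda_2 \approx 2b$ separated from the noise. I would also ensure that the random-dependent part is only ever bounded through the worst-case spectral norm $\|M\|$, never through coordinatewise independence. Getting the third term to stay subdominant, rather than reintroducing a polynomial dependence on $\sin\theta$, is where I expect the constants to be delicate.
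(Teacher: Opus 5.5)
The paper gives no argument for this statement (it is quoted from \cite{dharmawan1}), so your sketch has to stand on its own. Its decisive step fails. In Step 3 you control $\sin\theta\,M\vect{e}$ only through $\|M\|$, which bounds the number of coordinates it can flip by $Cn\sin^2\theta\,(a+b)/(a-b)^2$. This is just the naive $\ell_2$ count you dismissed in Step 1, improved by a factor $1/s$, where $s=(a-b)^2/(a+b)$. Inserting Theorem \ref{theorem:vector_angle_bound_Chin} makes it $O(ns^{-3/2})$. That is polynomial in $1/s$, hence far \emph{larger} than the exponential count $2ne^{-cs}$ coming from $M\vect{u_2}$, not ``of lower order''. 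With it, failure of $\gamma$-correctness is excluded only when $s\gtrsim\gamma^{-2/3}$, a polynomial rate, not $s\gtrsim\log(2/\gamma)$. This is not a matter of delicate constants. Knowing only $\|M\|$ and $\theta$ cannot prevent $\sin\theta\,\vect{e}$ from concentrating on $\asymp n\sin^2\theta$ coordinates: the perturbation $\varepsilon(\vect{u_2}\vect{e}^T+\vect{e}\vect{u_2}^T)$ with such an $\vect{e}$ already does this. Your plan to use ``only the worst-case spectral norm, never coordinatewise independence'' is exactly what rules out an exponential bound. The missing ingredient is an entrywise decoupling of $\vect{v_2}$ from row $i$ of $M$. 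One option is a leave-one-out bound giving $(M\vect{v_2})(i)\approx\cos\theta\,(M\vect{u_2})(i)$ uniformly in $i$, which is itself delicate at bounded degree. The other is a clean-up step on edges independent of $\vect{v_2}$, such as edge splitting followed by majority-vote correction as in \cite{chin2015stochasticblockmodelcommunity}; with $m$ samples you could simply hold one back. (A smaller point: plain bounded differences over the $\approx 2n^2$ edge variables controls the $M\vect{u_2}$-count only up to fluctuations of order $n$, so use a second-moment bound instead.)

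Second, ``feeding in the assumed angle bound'' reverses an inequality. Theorem \ref{theorem:vector_angle_bound_Chin} gives $s\le C/\sin^4\theta$, hence $e^{-cs}\ge e^{-cC/\sin^4\theta}$, so the $M\vect{u_2}$-count is not bounded by $ne^{-c/\sin^4\theta}$. That count does not depend on $\theta$ at all. On the bad event it only yields $s\lesssim\log(2/\gamma)$, a constraint on $a,b$ rather than on the angle. A repaired Step 3 would therefore prove the conclusion of Theorem \ref{theorem:main-theorem} directly and bypass the displayed inequality, and that is the better target anyway. To see why, let $\delta=\sin\angle(W,W_E)$. Then $\|P_W\vect{u_2}\|^2\ge 1-\delta^2$ and $|\langle\vect{u_2},\vect{v_1}\rangle|/\|\vect{v_1}\|\le\delta/\sqrt{1-\delta^2}$, so $\sin\theta\le\sqrt{3}\,\delta$. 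By Theorem \ref{theorem:W_angle_bound_Chin} this is $O(s^{-1/2})$. The displayed bound would then give $\gamma\le 2e^{-c's^2}$ with probability $1-o(1)$, which is incompatible with Equation \ref{eq:zhang} once $s$ is large. The same clash hits your contrapositive reading if the target $\gamma$ is taken just below the minimax level $e^{-s/c}$.
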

which, combined with Theorem \ref{theorem:vector_angle_bound_Chin}, gives us the following result:

\begin{theorem}
\label{theorem:gamma_bound_quad_Chin}
    There exist constants $C_1, C_2$ such that if $a > b > C_1$, then we have 
    \begin{equation}
        \frac{1}{\sqrt{ \log( 2/ \gamma ) }} \leq C_2 \frac{\sqrt{a+b}}{a-b}
    \end{equation}
    with probability $1-o(1)$.
\end{theorem}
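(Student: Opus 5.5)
The bound follows by chaining Theorem \ref{theorem:dharmawan_log} with Theorem \ref{theorem:vector_angle_bound_Chin}. Both hold with probability $1-o(1)$, so by a union bound they hold simultaneously with probability $1-o(1)$. On this event both inequalities concern the same random quantity $\sin \angle(\vect{u_2},\vect{v_2})$, so we may compare them:
\begin{equation}
\frac{C}{\sqrt[4]{\log(2/\gamma)}} \leq \sin \angle(\vect{u_2},\vect{v_2}) \leq C_2' \sqrt{\frac{\sqrt{a+b}}{a-b}} .
\end{equation}
Here $C$ is the constant from Theorem \ref{theorem:dharmawan_log}, and $C_1, C_2'$ are the constants from Theorem \ref{theorem:vector_angle_bound_Chin}. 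We keep the same threshold $C_1$ on $a>b$.

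Both sides are nonnegative, so we may square the chain and divide by $C^2$. This gives
\begin{equation}
\frac{1}{\sqrt{\log(2/\gamma)}} \leq \frac{(C_2')^2}{C^2}\,\frac{\sqrt{a+b}}{a-b},
\end{equation}
which is the claim with $C_2 := (C_2')^2/C^2$. Squaring once more gives the familiar form $\frac{(a-b)^2}{a+b} \leq C_2^2 \log(2/\gamma)$. Contrapositively, if $\frac{(a-b)^2}{a+b}$ exceeds a large constant times $\log(2/\gamma)$, the error rate of \textbf{Spectral Partition} must be below $\gamma$. This is how the result feeds into Theorem \ref{theorem:chin-theorem}.

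The main subtlety is interpreting Theorem \ref{theorem:dharmawan_log} correctly. There, $\gamma$ denotes the error fraction actually achieved by the output partition, i.e.\ the smallest $\gamma$ for which $(V_1',V_2')$ is $\gamma$-correct. The lower bound on the angle is forced by the partition misclassifying a $\gamma$ fraction of vertices. We must check that this is the same random $\gamma$ appearing in the statement, and that the high-probability events of the two theorems refer to the same sample $A$. Both conditions hold because both theorems are stated for the output of the same run of the algorithm on $A$. Once that is settled, the remaining steps are elementary algebra with constants.
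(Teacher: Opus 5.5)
Your proposal is correct and follows the paper's own argument. The paper obtains this theorem by combining the lower bound of Theorem \ref{theorem:dharmawan_log} with the upper bound of Theorem \ref{theorem:vector_angle_bound_Chin} on the same quantity $\sin\angle(\vect{u_2},\vect{v_2})$ and squaring; your union bound, your reading of $\gamma$ as the realized error rate, and the explicit constant $C_2=(C_2')^2/C^2$ simply spell out steps the paper leaves implicit.
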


which is equivalent to Theorem \ref{theorem:chin-theorem}.

In Section \ref{section:exp_1} we show that all these bounds are asymptotically sharp for randomly-generated graphs.

\section{Spectral Algorithm With Multiple Samples}
\label{section:spectral_us}

We use the same Spectral Algorithm as in Figure \ref{fig:spectral1}. However, suppose $A_1, \dots, A_m$ are the adjacency matrices realized from $m$ independent samples, we simply run the Spectral Algorithm on $A = (A_1 + \dots + A_m) / m$. 

Now we shall proceed by replicating each step of the proof in Section \ref{section:spectral_Chin} while introducing the effect of $m \geq 1$ samples on the various bounds.

\subsection{Bounding the difference matrix}

Let $M = A-A_E$ be the difference between the expected and realized adjacency matrix, except in this case the realized matrix $A = (A_1 + \dots + A_m)/m$ is the average of $m$ independent samples.

\begin{theorem}
\label{theorem:M_bound_us}
    There exist constants $C_1, C_2$ such that if $a > b > C_1$, and matrix $M$ is obtained as described above, then we have 
    \begin{equation}
        ||M|| \leq C_2 \frac{\sqrt{a+b}}{\sqrt{m}}
    \end{equation}
    with probability $1-o(1)$.
\end{theorem}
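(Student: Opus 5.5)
The plan is to view $mA = A_1 + \dots + A_m$ as the adjacency matrix of a random multigraph whose entries are independent across unordered pairs $\{i,j\}$. Each entry is a sum of $m$ independent Bernoulli variables with mean $a/n$ or $b/n$. The aim is to show $\|mM\| = \|mA - mA_E\| \leq C_2\sqrt{m(a+b)}$ with probability $1-o(1)$; dividing by $m$ then gives the claim. This is exactly the statement of Theorem \ref{theorem:M_bound_Chin} with $(a,b)$ replaced by $(ma,mb)$: the entries of $\mathbb{E}[mA]$ are $ma/n$ and $mb/n$, and the entry variances are at most $m(a+b)/n$ up to constants. The key observation is that the proof of Theorem \ref{theorem:M_bound_Chin} in \cite{dharmawan1} is applied to the unmodified matrix. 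It therefore only uses independence of the entries, uniform boundedness, and the variance bound $\sigma^2 \lesssim d/n$ with $d = a+b$. No degree truncation is involved, so no structure specific to Bernoulli entries needs to be preserved.

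Concretely, I would rerun the standard two-part argument (Feige--Ofek / Kahn--Szemer\'edi style) for $\sup_{x,y} x^T (mM) y$ over a net of unit vectors. The pairs $(i,j)$ split into \emph{light} pairs, where $|x_i y_j| \leq \sqrt{md}/n$, and \emph{heavy} pairs. The light-pair contribution is handled by Bernstein's inequality. The summands $x_i y_j (mM)_{ij}$ are independent, their total variance is at most $md/n \cdot \sum x_i^2 y_j^2 \leq md/n$, and each is bounded by $m\sqrt{md}/n$. This yields a tail of $\exp(-c\, t^2 n)$ at deviation $t\sqrt{md}$. A union bound over an $\epsilon$-net of size $e^{O(n)}$ then works for $t$ a large constant. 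Here one must check that the $L^\infty$ bound $m$ on entries does not spoil Bernstein. Since $m$ is a fixed constant, the bounded-term contribution is $m\sqrt{md}/n \cdot t\sqrt{md}$, which is comparable to the variance term $md/n$ up to a constant depending on $m$. To get $C_2$ independent of $m$, I would instead use the Bernoulli-sum structure directly, bounding the moment generating function of each entry by that of a Poisson variable with mean $m p_{ij}$, which gives a Bennett-type bound.

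The heavy pairs are handled by the discrepancy (bounded-degree / edge-count) property. With probability $1-o(1)$, for all vertex sets $S,T$ the multigraph edge count $e(S,T)$ either is at most $C\mu(S,T)$, where $\mu(S,T) = md|S||T|/n$, or satisfies $e(S,T)\log\frac{e(S,T)}{\mu(S,T)} \leq C|T|\log\frac{n}{|T|}$. This follows from Chernoff bounds for $e(S,T)$, which is a sum of independent Bernoulli variables with mean at most $\mu(S,T)$. The dyadic level-set argument then bounds the heavy contribution by $O(\sqrt{md})$. The main obstacle is the row-sum (degree) condition in the sparse regime. In \cite{chin2015stochasticblockmodelcommunity} this is what forced the deletion of high-degree vertices, and \cite{dharmawan1} handles it on the raw matrix. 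I would follow their treatment verbatim with $d$ replaced by $md$, noting that a larger average degree $md$ only makes the degree concentration easier. Finally, dividing by $m$ gives $\|M\| \leq C_2\sqrt{a+b}/\sqrt{m}$.
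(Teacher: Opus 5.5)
The gap is the step you flag as ``the main obstacle'' and then defer to \cite{dharmawan1}: the degree bound for the raw matrix. It cannot be closed by any argument, because for the untruncated matrix the estimate is false in the sparse regime.

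Fix $a,b,m$ and let $n\to\infty$. The maximum degree $\Delta$ of $A_1$ tends to infinity with probability $1-o(1)$; it is of order $\log n/\log\log n$. Take $x=e_v$ for a vertex $v$ of degree $\Delta$ in $A_1$, and $y=\mathbf{1}_{N(v)}/\sqrt{\Delta}$, where $N(v)$ is the neighbourhood of $v$ in $A_1$. Every pair in the support of $x y^{T}$ is heavy in your sense. Since $(mA)_{vj}\geq 1$ for $j\in N(v)$ and $(mA_E)_{vj}\leq ma/n$, you get $x^{T}(mM)y\geq\sqrt{\Delta}\,(1-ma/n)$. Hence $\|M\|\geq\sqrt{\Delta}\,(1-ma/n)/m\to\infty$, while $C_2\sqrt{(a+b)/m}$ stays bounded.

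The discrepancy property still holds here, but it does not rule out this star configuration. That is exactly why Feige--Ofek need a separate bounded-degree property, and why Chin et al.\ deleted high-degree vertices. Your remark that the larger mean degree $md$ makes concentration easier does not help when $m$ is fixed, since $md$ is still $O(1)$. Likewise, your claim that the single-sample argument uses only independence, boundedness and $\sigma^2\lesssim d/n$ cannot be right. The raw sparse adjacency matrix satisfies all three and violates $\|M\|=O(\sqrt{d})$.

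So an ingredient has to be added, and there are two options.
\begin{itemize}
\item Assume $m(a+b)\geq C\log n$. Then Chernoff bounds and a union bound over the $2n$ vertices give $\max_v \deg_{mA}(v)\leq C'm(a+b)$ with probability $1-o(1)$. The rest of your plan then goes through: light pairs via the Poisson/Bennett moment generating function bound (which does give constants independent of $m$), and heavy pairs via discrepancy plus this degree bound.
\item Keep $a,b$ constant and reinstate a regularization step that removes or down-weights vertices of degree above $Cm(a+b)$. This changes the algorithm.
\end{itemize}

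The paper's own proof takes a different route: it takes $\mathbb{E}[\lambda_1(M)]=2\sigma\sqrt{n}+O(n^{1/3}\log n)$ from F\"uredi--Koml\'os and combines it with Krivelevich--Vu concentration. It does not supply the missing degree control either. Since $\sigma\sqrt{n}\leq\sqrt{(a+b)/m}$, its leading term dominates only when $a+b\gg m\,n^{2/3}\log^2 n$, which is the dense regime of the experiments ($a=0.06n$). Once the degree step is closed under $m(a+b)\geq C\log n$, your net-based route covers a much wider range of densities than the paper's argument.
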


\begin{proof}
Compare Theorem \ref{theorem:M_bound_us} to Theorem \ref{theorem:M_bound_Chin}. Each matrix $A_k, 1 \leq k \leq m$ has entries $A_{ijk}$ that are sampled from a Bernoulli distribution with success probability $p_{ij}$ where $p_{ij} = a/n$ if $i,j$ belong to the same community, and $p_{ij} = b/n$ otherwise. Therefore, the entries of matrix $M$ have mean zero and variance $\sigma_{ij}^2 = \frac{p_{ij}(1-p_{ij})}{m} \leq \sigma^2 $ where $\sigma^2$ is the maximum variance of a single element.

Because $b < a < n/2$ we have:
\begin{equation}
    \sigma_{ij}^2 \leq \sigma^2 = \max \left( \frac{\frac{a}{n}(1-\frac{a}{n})}{m}, \frac{\frac{b}{n}(1-\frac{b}{n})}{m} \right) = \frac{a}{mn} \left( 1 - \frac{a}{n}\right) \leq \frac{a+b}{mn}
\end{equation}

Let $\lambda_1(M)$ be the largest eigenvalue of $M$. Because $M$ is real-valued and symmetric, $\lambda_1(M) = ||M||$. Now we use the result from \cite{Fredi1981TheEO} to determine $\mathbb{E}[\lambda_1(M)]$. Since all entries have mean zero and variance at most $\sigma^2$, we have:

\begin{equation}
    \label{eq:furedi}
    \mathbb{E}[\lambda_1(M)] = 2 \sigma \sqrt{n} + O(n^{1/3} \log n)
\end{equation}

For large enough $n$, the first term dominates. So $\mathbb{E}[\lambda_1(M)] = O(\sigma \sqrt{n})$. Note: \cite{Fredi1981TheEO} uses the premise that all entries have mean zero and common variance, but \cite{krivelevich2000concentrationeigenvaluesrandomsymmetric} showed that the assumption of common variance can be relaxed to $Var[M_{ij}] \leq \sigma^2$.

Next, also according to \cite{krivelevich2000concentrationeigenvaluesrandomsymmetric}, there are positive constants $c$ and $K$ such that for any $t > K$,

\begin{equation}
    \label{eq:krivelevich}
    P \left[ | \lambda_1(M) - \mathbb{E}[ \lambda_1(M) ] | \geq t \right] \leq e^{-ct^2}
\end{equation}

Combining equations \ref{eq:furedi} and \ref{eq:krivelevich}, we see that the probability of $||M||$ deviating from the mean $2\sigma \sqrt{n}$ is vanishingly small the further it is from the mean. So if we take any constant $C_2 > 2$, with probability $1-o(1)$, $||M||$ will not exceed $C_2 \sigma \sqrt{n}$. In other words, for large enough $b$ (and consequently $a,n$), we have with probability $1-o(1)$:

\begin{equation}
    ||M|| \leq C_2 \sigma \sqrt{n} \leq C_2 \frac{\sqrt{a+b}}{\sqrt{mn}} \sqrt{n}
\end{equation}
which completes the proof for Theorem \ref{theorem:M_bound_us}.
\end{proof}

\subsection{Bounding the angle between subspaces and vectors}
Our version of Theorems \ref{theorem:W_angle_bound_Chin} and \ref{theorem:vector_angle_bound_Chin} are as follows:

\begin{theorem}
\label{theorem:W_angle_bound_us}
    There exist constants $C_1, C_2$ such that if $a > b > C_1$, then we have 
    \begin{equation}
        \sin \angle(W,W_E) \leq C_2 \frac{\sqrt{a+b}}{\sqrt{m}(a-b)}
    \end{equation}
    with probability $1-o(1)$.
\end{theorem}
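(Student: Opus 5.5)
We will apply the Davis--Kahan $\sin\Theta$ theorem \cite{DAVIS1963159} to the symmetric pair $A_E$ and $A = A_E + M$, where $A=(A_1+\dots+A_m)/m$. We then feed in the noise bound of Theorem \ref{theorem:M_bound_us} in place of Theorem \ref{theorem:M_bound_Chin}. This is the same route as Theorem \ref{theorem:W_angle_bound_Chin}; the only change is the factor $1/\sqrt{m}$ in $\|M\|$.

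\textbf{Step 1: the eigengap of $A_E$.} Averaging does not change the expectation, so $\mathbb{E}[A]=A_E$. The spectrum of $A_E$ is therefore still $\lambda_1=a+b$, $\lambda_2=a-b$, and $0$ with multiplicity $2n-2$. The subspace $W_E=\mathrm{span}(\vect{u_1},\vect{u_2})$ is separated from the rest of the spectrum by a gap $\delta = a-b>0$.

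\textbf{Step 2: invoke Davis--Kahan.} We use the form in which the separation is measured between the two top eigenvalues of $A$ and the remaining eigenvalues of $A_E$. By Weyl's inequality, the top two eigenvalues of $A$ are at least $a-b-\|M\|$. Condition on the event of Theorem \ref{theorem:M_bound_us}, namely $\|M\|\le C\sqrt{a+b}/\sqrt m$, which holds with probability $1-o(1)$. Choose $C_1$ large enough that
\begin{equation*}
C\sqrt{a+b}/\sqrt{m} \le \tfrac12 (a-b).
\end{equation*}
Then the effective gap is at least $(a-b)/2$. Davis--Kahan now gives
\begin{equation*}
\sin\angle(W,W_E) \le \frac{\|M\|}{(a-b)/2} \le \frac{2C\sqrt{a+b}}{\sqrt{m}(a-b)}.
\end{equation*}
This is the claim with $C_2=2C$.

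\textbf{Step 3: the case $\|M\|$ is not small relative to the gap.} This is the main obstacle, because the condition $C\sqrt{a+b}/\sqrt m\le \frac12(a-b)$ is not implied merely by $a>b>C_1$. We handle it by a case split. If $\|M\| > (a-b)/2$, then the right-hand side of the claimed bound is already at least $1$ once $C_2\ge 2C$, since $\sin\angle\le 1$ always. The inequality therefore holds trivially in that case. We cannot simply push the condition into $C_1$: $C_1$ cannot depend on $a-b$, and in the application to Theorem \ref{theorem:main-theorem} the regime $(a-b)^2/(a+b)\ge (C_2/m)\log(2/\gamma)$ is assumed anyway. So on the high-probability event, either the nontrivial Davis--Kahan bound or the trivial bound applies, and the statement holds with probability $1-o(1)$.
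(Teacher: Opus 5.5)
Your proof is correct and follows the paper's route: the paper's whole argument is that Davis--Kahan makes $\sin\angle(W,W_E)$ proportional to $\|M\|/(a-b)$, so the $1/\sqrt{m}$ factor from Theorem~\ref{theorem:M_bound_us} carries straight through. You make explicit what the paper leaves implicit: the gap $a-b$ of $A_E$, the Weyl estimate that yields an effective gap of $(a-b)/2$, and the case split using $\sin\angle\le 1$ when $\|M\|>(a-b)/2$. As you note yourself, that case split should replace the unachievable ``choose $C_1$'' sentence in Step 2.
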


\begin{theorem}
\label{theorem:vector_angle_bound_us}
    There exist constants $C_1, C_2$ such that if $a > b > C_1$, then we have 
    \begin{equation}
        \sin \angle(\vect{u_2},\vect{v_2}) \leq C_2 \sqrt{\frac{\sqrt{a+b}}{\sqrt{m}(a-b)}}
    \end{equation}
    with probability $1-o(1)$.
\end{theorem}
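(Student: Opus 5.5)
The statement to prove is Theorem \ref{theorem:vector_angle_bound_us}. I will reproduce the single-sample argument behind Theorem \ref{theorem:vector_angle_bound_Chin}, changing only the input bound. The deterministic lemma of \cite{chin2015stochasticblockmodelcommunity} says that $\sin\angle(\vect{u_2},\vect{v_2}) \le C\sqrt{\sin\angle(W,W_E)}$ whenever $\sin\angle(W,W_E)$ is below a small absolute constant. Here $\vect{v_1}$ is the projection of the all-ones vector onto $W$, and $\vect{v_2}$ is the unit vector of $W$ orthogonal to $\vect{v_1}$. This lemma is purely linear-algebraic: it uses only the geometry of the two-dimensional subspaces $W$ and $W_E = \mathrm{span}(\vect{u_1},\vect{u_2})$ and the construction in steps 5--6 of \textbf{Spectral Partition}. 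It never refers to how $A$ was generated, so it applies unchanged when $A=(A_1+\dots+A_m)/m$. Feeding Theorem \ref{theorem:W_angle_bound_us} into the lemma gives the claimed bound on the same $1-o(1)$ event.

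To make this self-contained, I would sketch the lemma. Set $\epsilon=\sin\angle(W,W_E)$ and $\vect{1}=\sqrt{2n}\,\vect{u_1}$. Since $\vect{u_1}\in W_E$,
\[
\|\vect{v_1}-\vect{1}\| = \|(P_W-P_{W_E})\vect{1}\| \le \epsilon\|\vect{1}\|,
\]
so the normalized $\hat{\vect{v}}_1$ satisfies $\sin\angle(\hat{\vect{v}}_1,\vect{u_1})=O(\epsilon)$. Next, $P_W\vect{u_2}$ lies within $\epsilon$ of $\vect{u_2}$. Its component along $\hat{\vect{v}}_1$ is
\[
\langle \vect{u_2},\hat{\vect{v}}_1\rangle = \langle \vect{u_2},\hat{\vect{v}}_1-\vect{u_1}\rangle = O(\epsilon),
\]
so after removing that component and renormalizing we obtain a unit vector in $W$ orthogonal to $\hat{\vect{v}}_1$. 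That vector must be $\pm\vect{v_2}$, and it lies within $O(\epsilon)$ of $\vect{u_2}$. This direct computation actually gives the linear bound $O(\epsilon)$, which is stronger than $O(\sqrt{\epsilon})$, and $O(\epsilon)$ implies $O(\sqrt\epsilon)$ when $\epsilon\le 1$. Either way, the square-root form stated in the theorem follows, so I would simply cite the lemma and optionally remark that the bound is loose.

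The remaining step is the regime condition. The lemma needs $\epsilon$ to be at most a small constant $c_0$. If instead $C_2' \sqrt{a+b}/(\sqrt m(a-b)) > c_0$, I can choose $C_2 \ge 1/\sqrt{c_0}$. Then the right-hand side of the theorem exceeds $1$, and the inequality holds trivially because a sine is at most $1$. So the constant adjustment handles both cases, and the probability $1-o(1)$ is inherited directly from Theorem \ref{theorem:W_angle_bound_us}, which in turn rests on Theorem \ref{theorem:M_bound_us} and Davis--Kahan.

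The only real obstacle is confirming that the deterministic lemma makes no hidden use of single-sample structure, for example integrality of entries or the degree-based preprocessing removed in \cite{dharmawan1}. Since $\vect{v_1},\vect{v_2}$ are defined from $W$ alone, I expect this to be fine.
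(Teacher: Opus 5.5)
Your proposal is correct and matches the paper's argument. The paper also obtains Theorem~\ref{theorem:vector_angle_bound_us} by feeding the multi-sample subspace bound of Theorem~\ref{theorem:W_angle_bound_us} into the deterministic $\sqrt{\sin\angle(W,W_E)}$ lemma of \cite{chin2015stochasticblockmodelcommunity}. Your self-contained sketch of that lemma also checks out, and it even gives the stronger linear bound $O(\sin\angle(W,W_E))$.

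The only slip is in the constant for the trivial regime. With $x=\frac{\sqrt{a+b}}{\sqrt{m}(a-b)}$ and $C_2'x>c_0$, you need $C_2\ge\sqrt{C_2'/c_0}$, not $1/\sqrt{c_0}$, to force the right-hand side above $1$.
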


These two theorems follow immediately because $\sin \angle(W,W_E)$ is linearly proportional to $||M||$ (Davis-Kahan theorem \cite{DAVIS1963159} ) and that $\sin \angle(\vect{u_2},\vect{v_2})$ is proportional to $\sqrt{\sin \angle(W,W_E)}$ as shown in \cite{chin2015stochasticblockmodelcommunity}.

\subsection{Bounding error from \textbf{Spectral Partition}}

When we combine Theorem \ref{theorem:dharmawan_log} and Theorem \ref{theorem:vector_angle_bound_us}, we get the following:
$$\frac{C_1}{\sqrt[4]{\log (2/ \gamma)}} \leq \sin \angle(\vect{u_2},\vect{v_2}) \leq C_2 \sqrt{\frac{\sqrt{a+b}}{\sqrt{m}(a-b)}}$$
for some constants $C_1,C_2$. This simplifies to:
$$\frac{(a-b)^2}{a+b} \leq \frac{C_3}{m} \log \frac{2}{\gamma}$$
for some constant $C_3$. This form is precisely the claim of Theorem \ref{theorem:main-theorem}. The inequality sign is reversed because in Theorem \ref{theorem:main-theorem}, $\gamma$ denotes the desired upper bound of the error-rate that the algorithm can guarantee, whereas in the inequality above, $\gamma$ denotes the \textbf{realized} error rate after the algorithm is completed.

\begin{figure}[h]
\centering
\begin{minipage}{0.45\textwidth}
    \centering
    \includegraphics[width=\textwidth]{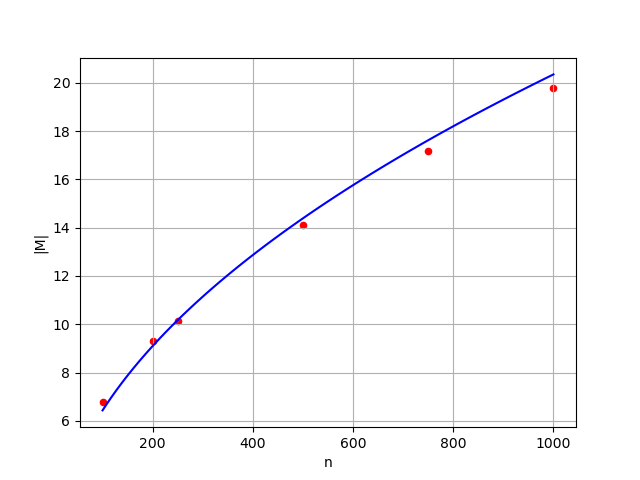}
    \caption{$||M||$ (red dots) versus $C\sqrt{a+b}$ (blue curve)}
    \label{fig:M_bound_1}
\end{minipage}
\hfill
\begin{minipage}{0.45\textwidth}
    \centering
    \includegraphics[width=\textwidth]{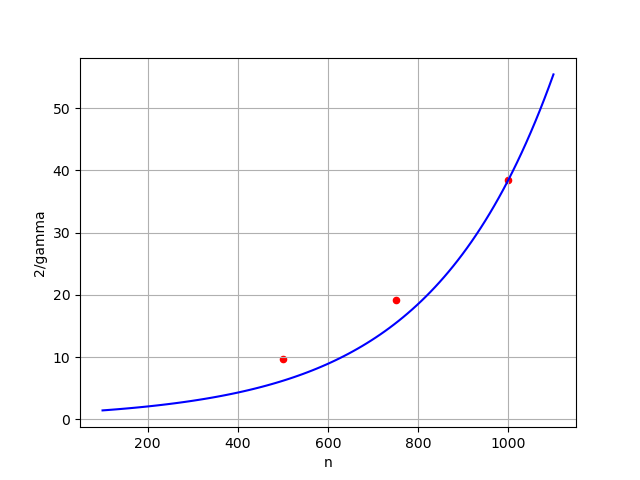}
    \caption{$2/\gamma$  and $e^{C(a-b)^2/(a+b)}$  for various $n$}
    \label{fig:gamma_bound_1}
\end{minipage}
\end{figure}

\begin{figure}[h]
    \begin{minipage}{0.3\textwidth}
    \includegraphics[width=\linewidth]{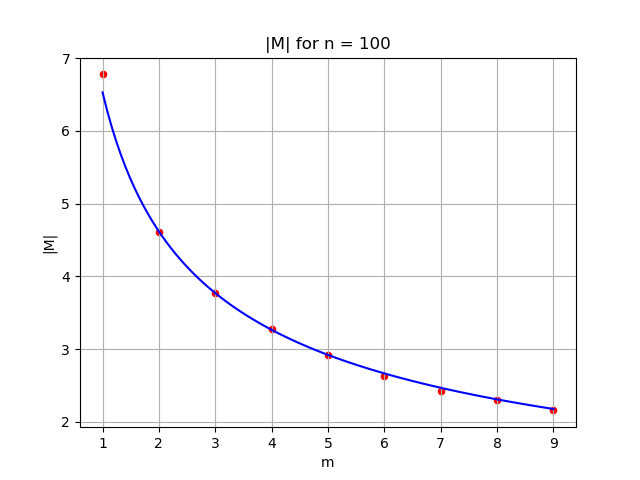}
    \end{minipage}
    \hspace{\fill} 
    \begin{minipage}{0.3\textwidth}
    \includegraphics[width=\linewidth]{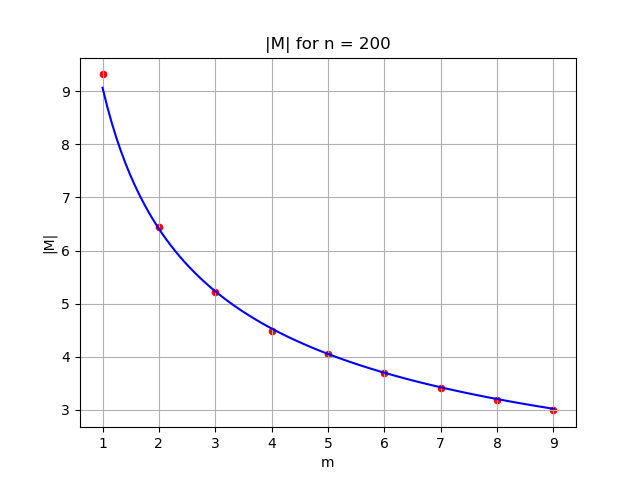}
    \end{minipage}
    \hspace{\fill} 
    \begin{minipage}{0.3\textwidth}
    \includegraphics[width=\linewidth]{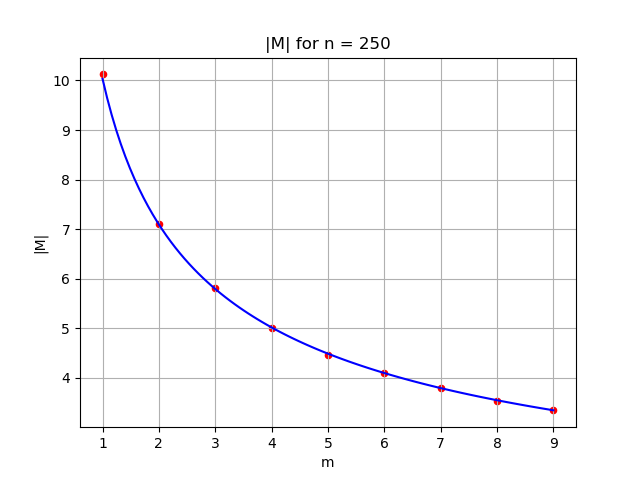}
    \end{minipage}
    
    \begin{minipage}{0.3\textwidth}
    \includegraphics[width=\linewidth]{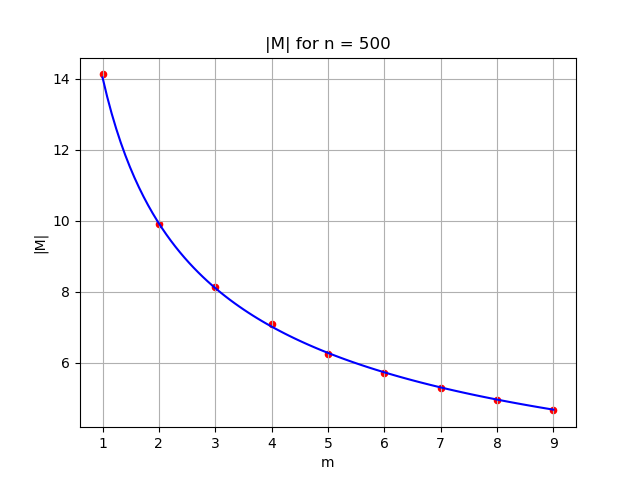}
    \end{minipage}
    \hspace{\fill} 
    \begin{minipage}{0.3\textwidth}
    \includegraphics[width=\linewidth]{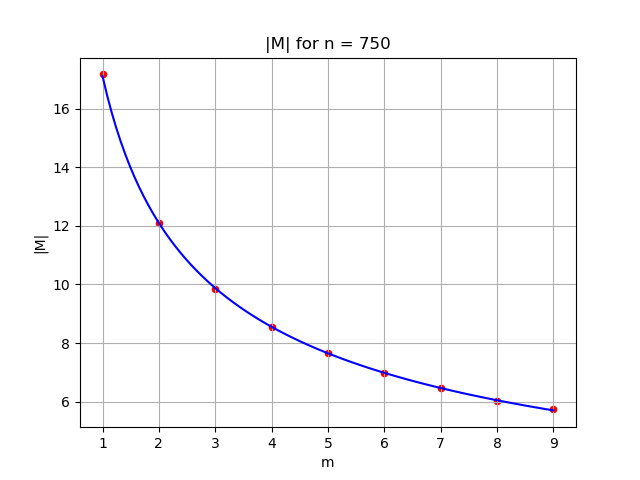}
    \end{minipage}
    \hspace{\fill} 
    \begin{minipage}{0.3\textwidth}
    \includegraphics[width=\linewidth]{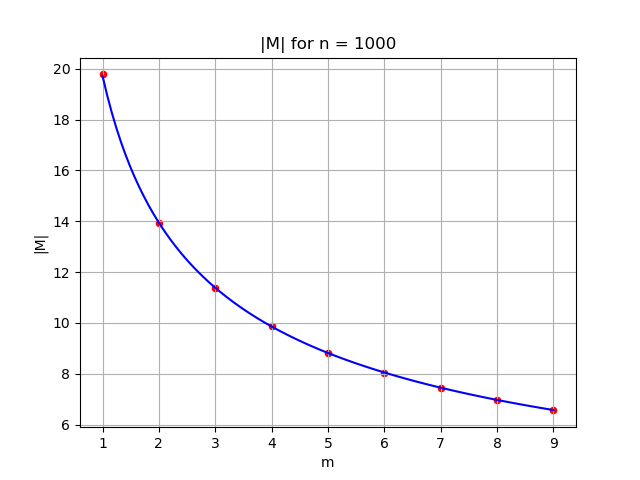}
    \end{minipage}

\caption{$||M||$ (red dots) versus $C / \sqrt{m}$ (blue curves) for various values of $n$} \label{fig:M_bound_m}
\end{figure}

In Section \ref{section:exp_m} we show that the bounds in Theorems \ref{theorem:M_bound_us}, \ref{theorem:W_angle_bound_us}, and \ref{theorem:main-theorem} are asymptotically sharp.

\section{Experimental Results for 1 Sample}
\label{section:exp_1}

We conducted the experiments for 1 sample, with $a=0.06n,b=0.04n$ for various values of $n$. For each value of $n$, we conducted 500 experiments, each with a different random seed. The plots are available in the appendix. The raw CSV file containing the data, as well as the code to reproduce the data and plot, is included in the ZIP file as supplementary material and is also available upon request.

First, we show that the bound in Theorem \ref{theorem:M_bound_Chin} is sharp, even after we skip the deletion step. Indeed, Figure \ref{fig:M_bound_1} shows the maximum $||M||$ compared against the curve $0.644\sqrt{n}$. Note that when $a/n$ and $b/n$ are constant, the RHS of Theorem \ref{theorem:M_bound_Chin} scales with $\sqrt{n}$. 

Now, we show that the bound in Theorem \ref{theorem:chin-theorem} is sharp. For each value of $n$, we take the worst performance (largest $\gamma$) out of the 500 experiments. For lower values of $n$, the maximum $\gamma$ was 0.5, so we did not include those values of $n$, because for 2 communities, $0.5$ is the worst possible performance. We are left with only 3 values of $n$, namely $n=500,750,1000$. Figure \ref{fig:gamma_bound_1} shows $2/\gamma_{max}$ compared against the curve $\exp \left(C \frac{(a-b)^2}{a+b} \right)$. Later in Section \ref{section:exp_m} we will see that for some higher values of $m$, when we keep $m$ constant, we see the relationship more clearly between $n$ and $\gamma$, since at that point more values of $n$ produce meaningfully small $\gamma$.

\section{Experimental Results for Multiple Samples}
\label{section:exp_m}

We repeated the experiments in the previous section, keeping the same values of $n,a,b$, but for $m=1,2,\dots, 9$ samples, each time running 500 experiments. 

In Figure \ref{fig:M_bound_m} we see that for a fixed $n,a,b$, the resulting $||M||$ decreases with $\sqrt{m}$, showing that the bound in Theorem \ref{theorem:M_bound_us} is sharp. 

\begin{figure}[h]
    \begin{minipage}{0.22\textwidth}
    \includegraphics[width=\linewidth]{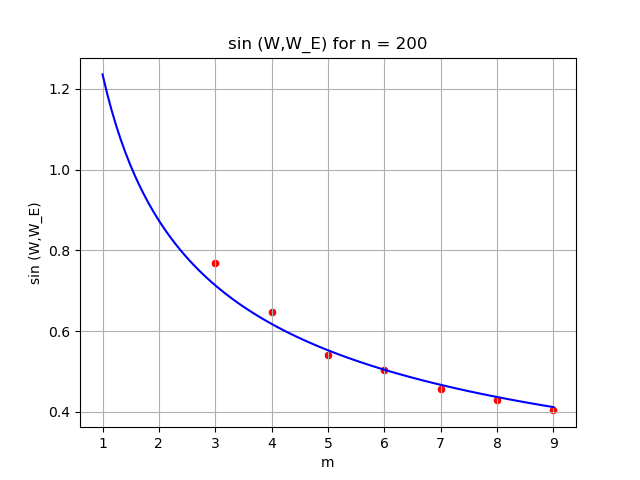}
    \end{minipage}
    \hspace{\fill} 
    \begin{minipage}{0.22\textwidth}
    \includegraphics[width=\linewidth]{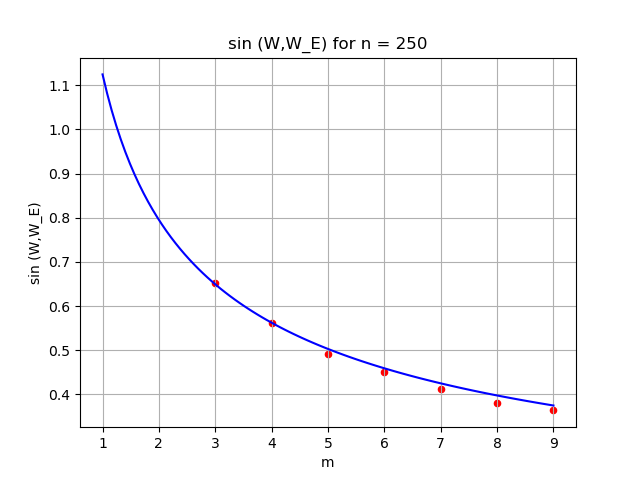}
    \end{minipage}
    \hspace{\fill} 
    \begin{minipage}{0.22\textwidth}
    \includegraphics[width=\linewidth]{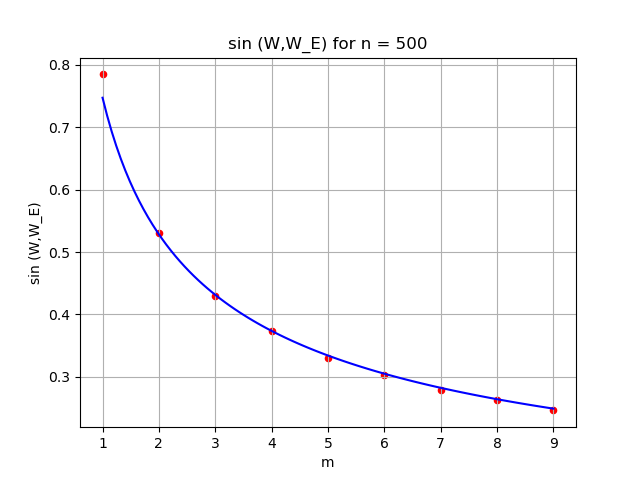}
    \end{minipage}
    \hspace{\fill} 
    \begin{minipage}{0.22\textwidth}
    \includegraphics[width=\linewidth]{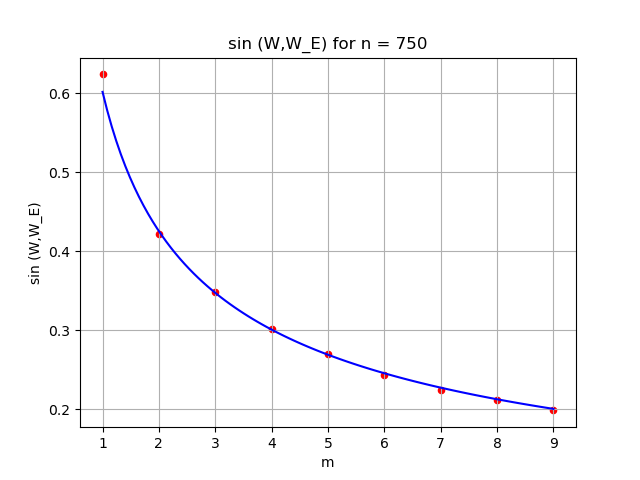}
    \end{minipage}
\caption{$\sin \angle(W,W_E)$ (red dots) versus $C / \sqrt{m}$ (blue curves) for various $n$} \label{fig:sin_W_bound_m}
\end{figure}

\begin{figure}[h]
    \begin{minipage}{0.3\textwidth}
    \includegraphics[width=\linewidth]{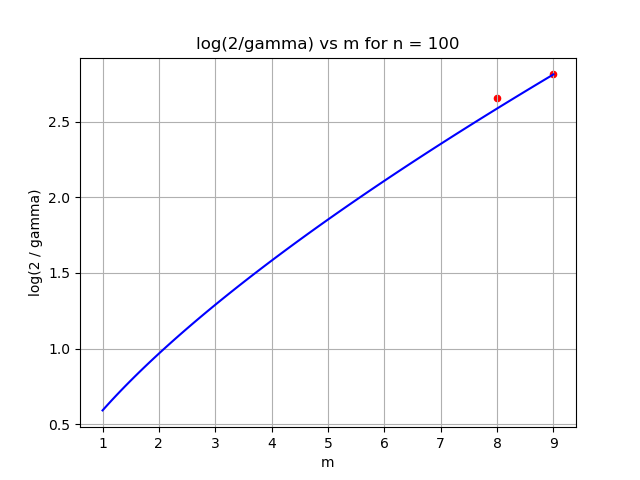}
    \end{minipage}
    \hspace{\fill} 
    \begin{minipage}{0.3\textwidth}
    \includegraphics[width=\linewidth]{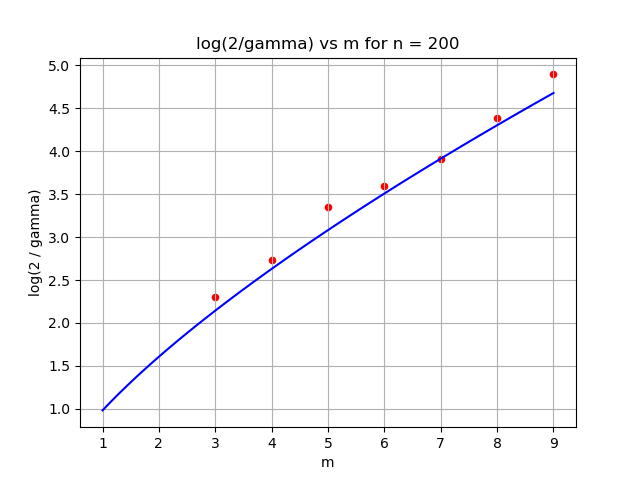}
    \end{minipage}
    \hspace{\fill} 
    \begin{minipage}{0.3\textwidth}
    \includegraphics[width=\linewidth]{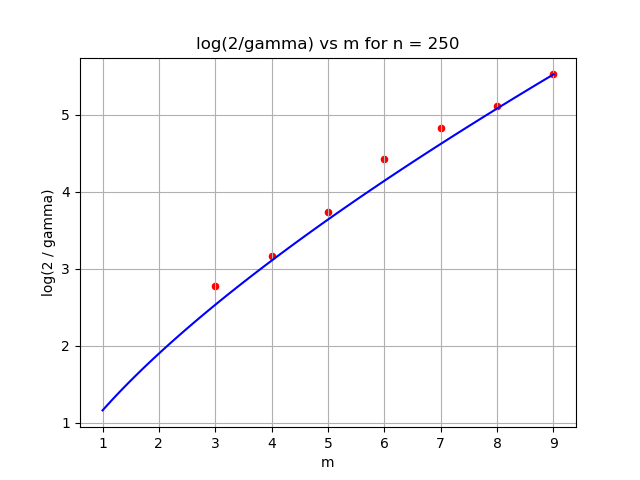}
    \end{minipage}

    \begin{minipage}{0.3\textwidth}
    \includegraphics[width=\linewidth]{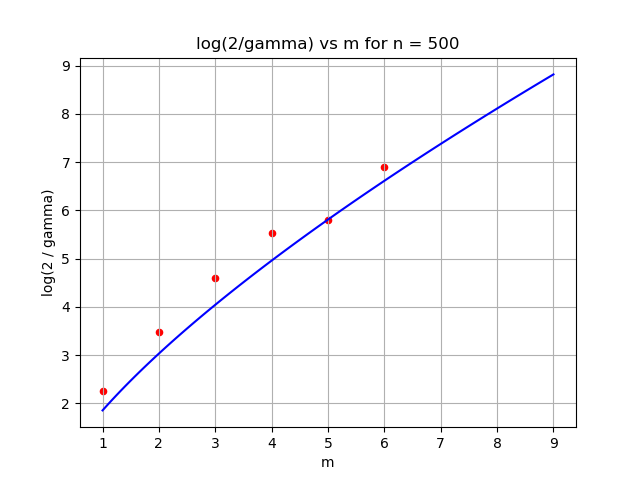}
    \end{minipage}
    \hspace{\fill} 
    \begin{minipage}{0.3\textwidth}
    \includegraphics[width=\linewidth]{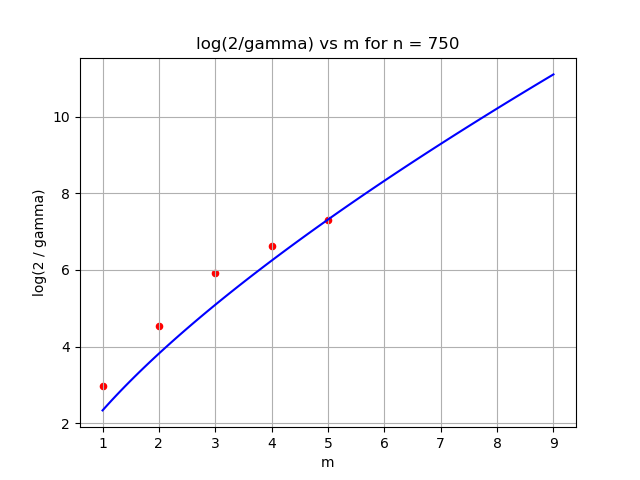}
    \end{minipage}
    \hspace{\fill} 
    \begin{minipage}{0.3\textwidth}
    \includegraphics[width=\linewidth]{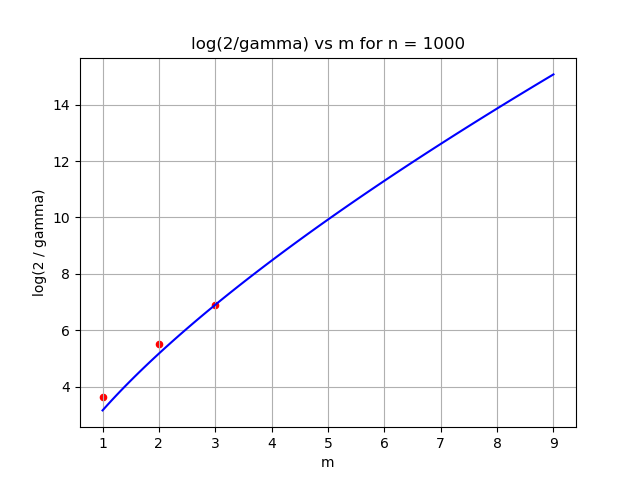}
    \end{minipage}

\caption{$\log(2 / \gamma)$ (red dots) versus $C / m$ (blue curves) for various $m,n$} \label{fig:gamma_bound_m}
\end{figure}

\begin{figure}[h]
    \begin{minipage}{0.3\textwidth}
    \includegraphics[width=\linewidth]{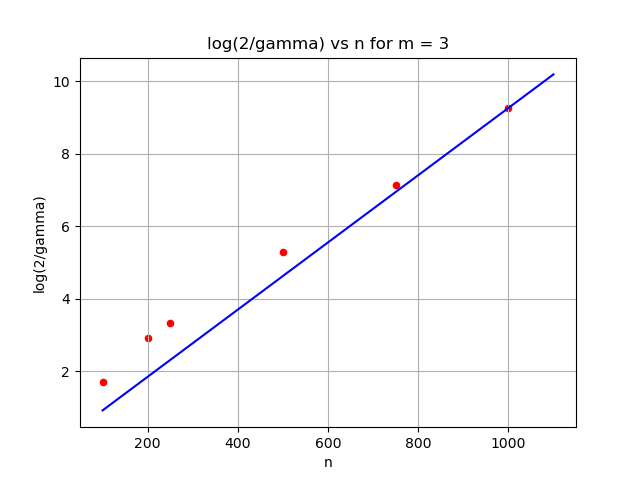}
    \end{minipage}
    \hspace{\fill} 
    \begin{minipage}{0.3\textwidth}
    \includegraphics[width=\linewidth]{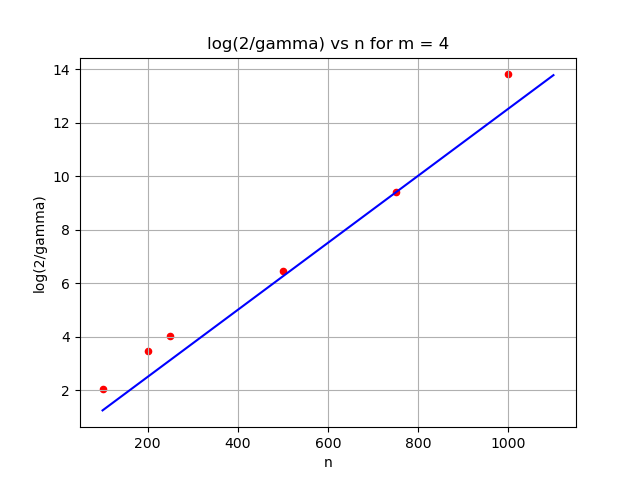}
    \end{minipage}
    \hspace{\fill} 
    \begin{minipage}{0.3\textwidth}
    \includegraphics[width=\linewidth]{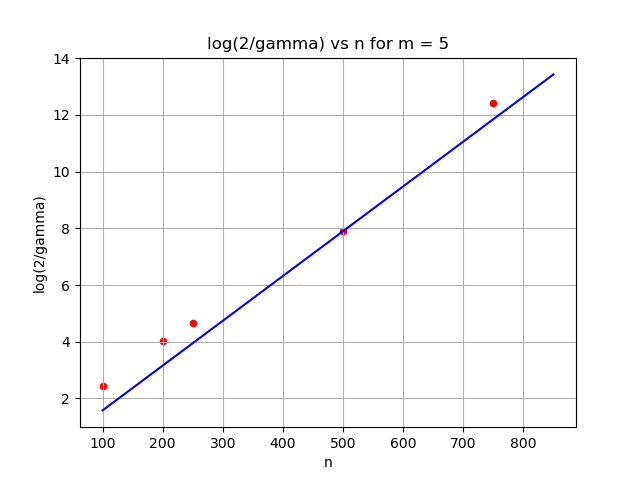}
    \end{minipage}

    \begin{minipage}{0.3\textwidth}
    \includegraphics[width=\linewidth]{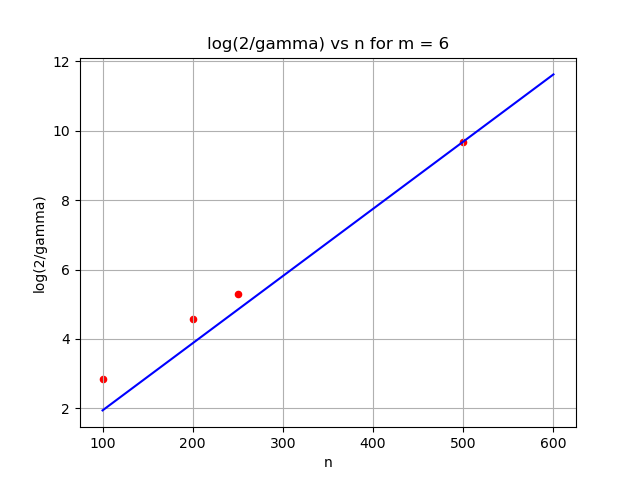}
    \end{minipage}
    \hspace{\fill} 
    \begin{minipage}{0.3\textwidth}
    \includegraphics[width=\linewidth]{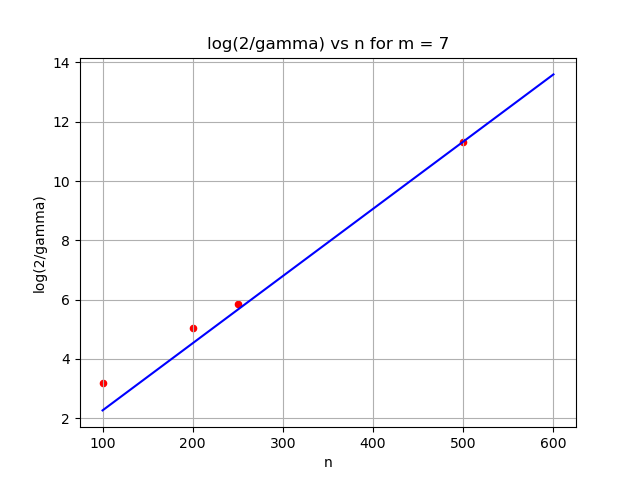}
    \end{minipage}
    \hspace{\fill} 
    \begin{minipage}{0.3\textwidth}
    \includegraphics[width=\linewidth]{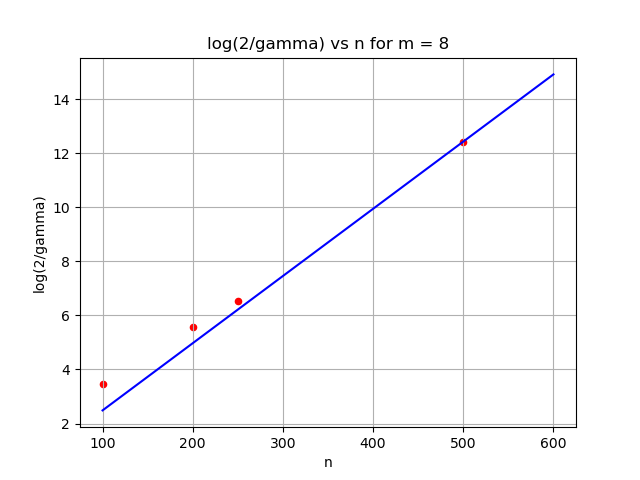}
    \end{minipage}

\caption{$\log(2/\gamma)$ versus $C(a-b)^2 / (a+b)$ for fixed values of $m \in \{3,4,5,6,7,8 \}$} \label{fig:gamma_bound_for_each_m}. The case $m=9$ gave $\gamma = 0$ so $\log(2 / \gamma$ is not plottable.
\end{figure}

In Figure \ref{fig:sin_W_bound_m} we also see that the resulting $\sin \angle(W,W_E)$ decreases with $\sqrt{m}$, showing that the bound in Theorem \ref{theorem:W_angle_bound_us} is sharp. In this plot, we omitted some of the lower values of $n$ and $m$ because $\sin \angle(W,W_E)$ was close to 1 for those combinations. Later we will also see that these combinations correspond to large $\gamma$. But even when, for a given $n$, the lower values of $m$ are discarded, the rest of the data still follow an inverse-square-root relationship, proving the sharpness of the bound.

Now we show the sharpness of bound in Theorem \ref{theorem:main-theorem}. Figure \ref{fig:gamma_bound_m} shows the plot of $\log(2 / \gamma)$ against $C/m$. As before, we discard the experiments where $\gamma$ is close to 0.5, primarily in lower values of $n$ and $m$ (the 
"harder" problems). But this time, we also discard $\gamma=0$ because for higher values of $n,m$, in all 500 experiments for that setting, we were able to achieve perfect recovery of the partition. Note that in general, the red dots match the blue curves reasonably well. However, sometimes the red dots fall above the blue lines. Since the theoretical bound is an upper bound on $\gamma$, it is expected that the observed worst-case $\gamma$ will often fall below the predicted curve, particularly for larger $n$ and $m$.

This observation reinforces the result in Theorem \ref{theorem:main-theorem} that as $m,n$ increase, and we hold $a/n, b/n$ constant, the problem becomes "easier" and the upper bound of $\gamma$ dramatically decreases. 

In particular, for $n=100$, then $b=4$ and we see that $\gamma=0.5$, practically no better than random guessing. However, with as low as $m=3$ samples, we start seeing that $\gamma \approx 0.36$, and it keeps improving until we get $\gamma = 0.05$ with $m=9$ samples. 

Finally, we hold $m$ constant and show the relationship between $2 / \gamma$ and $\exp \left(C \frac{(a-b)^2}{a+b} \right)$ for various values of $n$ (i.e. replicating Figure \ref{fig:gamma_bound_1} but this time we have more values of $n$ that are not discarded). We see that, for a fixed $m$, Theorem \ref{theorem:main-theorem} replicates Theorem \ref{theorem:chin-theorem} and demonstrates that the inverse-log relationship is indeed sharp. 

\section{Conclusion and Future Work}
In this work, we have developed an improved spectral algorithm for community detection in the two-block stochastic block model when multiple samples from the same underlying random graph distribution are available. Our theoretical analysis establishes a relationship between the number of samples and the recovery rate, demonstrating that additional samples significantly improve detection performance.

The sharp bounds that we have derived characterize the fundamental limits of our spectral approach, providing practitioners with clear guidelines on the sample complexity required to achieve the desired recovery accuracy. Our experimental results across various parameter regimes validate these theoretical guarantees, confirming that the performance improvements scale as predicted by our analysis.

This work opens several promising directions for future research. Extensions to models with multiple blocks, heterogeneous edge densities, or partially overlapping community structures would broaden the applicability of our approach. Additionally, investigating whether alternative algorithmic paradigms can overcome the limitations of spectral methods in the multi-sample setting remains an interesting open question. 

All code necessary to reproduce the results will be published to the public repository: \url{https://github.com/hendrata-th/sbm-multiple-samples}

%
%
%
 \bibliographystyle{splncs04}
 \bibliography{mybibfile}

\end{document}